\newtheorem{definition}{Definition}
\newtheorem{example}{Example}
\newtheorem{corollary}{Corollary}
\newtheorem{theorem}{Theorem}
\newtheorem{lemma}{Lemma}
\newcommand{\Nat}{\ensuremath{\mathbb{N}}}
\newcommand{\Var}{\ensuremath{\mathcal{V}}}
\newcommand{\RecV}{\ensuremath{\mathcal{R}}}
\newcommand{\VIdx}[1]{V_{\Nat}^{#1}}
\newcommand{\unif}{\ensuremath{\stackrel{{\scriptscriptstyle ?}}{=}}}
\newcommand{\unifl}{\ensuremath{\stackrel{{\scriptscriptstyle \circlearrowleft}}{=}}}
\newcommand{\shidx}[1]{\ensuremath{S^{#1}}}
\newcommand{\exd}[2]{\ensuremath{\mathbf{ex}_{#1}^{#2}}}
\def\QEQ{{%
    \setbox0\hbox{{\large $\circlearrowleft$}}%
    \rlap{\hbox to \wd0{\hss \unif \hss}}\box0
}}
\newcommand*{\shifttext}[2]{%
  \settowidth{\@tempdima}{#2}%
  \makebox[\@tempdima]{\hspace*{#1}#2}%
}
\newcommand\fat[1]{\ThisStyle{\ooalign{%
  \kern.46pt$\SavedStyle#1$\cr\kern.33pt$\SavedStyle#1$\cr%
  \kern.2pt$\SavedStyle#1$\cr$\SavedStyle#1$}}}
\newcommand{\lyus}[1]{
	\if#10 \ensuremath{\mbox{\includegraphics[scale=.025]{yus}}}
	 \else
	  \if#11 \ensuremath{\mbox{\includegraphics[scale=.035]{yus}}}
	   \else
	    \if#12 \ensuremath{\mbox{\includegraphics[scale=.045]{yus}}}
	     \else
	      \if#13 \ensuremath{\mbox{\includegraphics[scale=.055]{yus}}}
	       \else
	        \if#14 \ensuremath{\mbox{\includegraphics[scale=.065]{yus}}}
	         \else
	          \if#15 \ensuremath{\mbox{\includegraphics[scale=.075]{yus}}}
	          \else \ensuremath{\mbox{\includegraphics[scale=.045]{yus}}}
	          \fi 
	         \fi 
	        \fi 
	       \fi 
	      \fi 
	     \fi 
	     \makebox[.05cm][l]{}
}
\newcommand{\byus}[1]{
	\if#10 \ensuremath{\mbox{\includegraphics[scale=.025]{byus}}}
	 \else
	  \if#11 \ensuremath{\mbox{\includegraphics[scale=.035]{byus}}}
	   \else
	    \if#12 \ensuremath{\mbox{\includegraphics[scale=.045]{byus}}}
	     \else
	      \if#13 \ensuremath{\mbox{\includegraphics[scale=.055]{byus}}}
	       \else
	        \if#14 \ensuremath{\mbox{\includegraphics[scale=.065]{byus}}}
	         \else
	          \if#15 \ensuremath{\mbox{\includegraphics[scale=.075]{byus}}}
	          \else \ensuremath{\mbox{\includegraphics[scale=.045]{byus}}}
	          \fi 
	         \fi 
	        \fi 
	       \fi 
	      \fi 
	     \fi 
\makebox[.05cm][l]{}
}
\newcommand{\ibyus}[1]{
	\if#10 \ensuremath{\mbox{\includegraphics[scale=.025]{ibyus}}}
	 \else
	  \if#11 \ensuremath{\mbox{\includegraphics[scale=.035]{ibyus}}}
	   \else
	    \if#12 \ensuremath{\mbox{\includegraphics[scale=.045]{ibyus}}}
	     \else
	      \if#13 \ensuremath{\mbox{\includegraphics[scale=.055]{ibyus}}}
	       \else
	        \if#14 \ensuremath{\mbox{\includegraphics[scale=.065]{ibyus}}}
	         \else
	          \if#15 \ensuremath{\mbox{\includegraphics[scale=.075]{ibyus}}}
	          \else \ensuremath{\mbox{\includegraphics[scale=.045]{ibyus}}}
	          \fi 
	         \fi 
	        \fi 
	       \fi 
	      \fi 
	     \fi 
\makebox[.05cm][l]{}
}
\newcommand{\byusa}[1]{
	\if#10 \ensuremath{\mbox{\includegraphics[scale=.025]{byusarrow}}}
	 \else
	  \if#11 \ensuremath{\mbox{\includegraphics[scale=.035]{byusarrow}}}
	   \else
	    \if#12 \ensuremath{\mbox{\includegraphics[scale=.045]{byusarrow}}}
	     \else
	      \if#13 \ensuremath{\mbox{\includegraphics[scale=.055]{byusarrow}}}
	       \else
	        \if#14 \ensuremath{\mbox{\includegraphics[scale=.065]{byusarrow}}}
	         \else
	          \if#15 \ensuremath{\mbox{\includegraphics[scale=.075]{byusarrow}}}
	          \else \ensuremath{\mbox{\includegraphics[scale=.045]{byusarrow}}}
	          \fi 
	         \fi 
	        \fi 
	       \fi 
	      \fi 
	     \fi 
}
\newcommand{\qmarrow}[1]{
	\if#10 \ensuremath{\mbox{\includegraphics[scale=.065]{qarrow}}}
	 \else
	  \if#11 \ensuremath{\mbox{\includegraphics[scale=.075]{qarrow}}}
	   \else
	    \if#12 \ensuremath{\mbox{\includegraphics[scale=.085]{qarrow}}}
	     \else
	      \if#13 \ensuremath{\mbox{\includegraphics[scale=.095]{qarrow}}}
	       \else
	        \if#14 \ensuremath{\mbox{\includegraphics[scale=.105]{qarrow}}}
	         \else
	          \if#15 \ensuremath{\mbox{\includegraphics[scale=.075]{qarrow}}}
	          \else \ensuremath{\mbox{\includegraphics[scale=.045]{qarrow}}}
	          \fi 
	         \fi 
	        \fi 
	       \fi 
	      \fi 
	     \fi 
}
\newtheorem{proposition}{Proposition}
\begin{document}

\title{A Special Case of Schematic Syntactic Unification}

\author{David M. Cerna$^{1}$\IEEEauthorblockN{\thanks{Supported by the Linz Institute of Technology (LIT) Math$_{LP}$ project (LIT-
2019-7-YOU-213) funded by the state of upper Austria.}}
\IEEEauthorblockA{ Czech Academy of Sciences Institute of Computer Science (CAS ICS),
  Prague, Czechia\\ Research Institute for Symbolic Computation (RISC), Johannes Kepler University, Linz, Austria\\
dcerna@\{cs.cas.cz,risc.jku.at\}}

}

% conference papers do not typically use \thanks and this command
% is locked out in conference mode. If really needed, such as for
% the acknowledgment of grants, issue a \IEEEoverridecommandlockouts
% after \documentclass

% for over three affiliations, or if they all won't fit within the width
% of the page, use this alternative format:
% 
%\author{\IEEEauthorblockN{Michael Shell\IEEEauthorrefmark{1},
%Homer Simpson\IEEEauthorrefmark{2},
%James Kirk\IEEEauthorrefmark{3}, 
%Montgomery Scott\IEEEauthorrefmark{3} and
%Eldon Tyrell\IEEEauthorrefmark{4}}
%\IEEEauthorblockA{\IEEEauthorrefmark{1}School of Electrical and Computer Engineering\\
%Georgia Institute of Technology,
%Atlanta, Georgia 30332--0250\\ Email: see http://www.michaelshell.org/contact.html}
%\IEEEauthorblockA{\IEEEauthorrefmark{2}Twentieth Century Fox, Springfield, USA\\
%Email: homer@thesimpsons.com}
%\IEEEauthorblockA{\IEEEauthorrefmark{3}Starfleet Academy, San Francisco, California 96678-2391\\
%Telephone: (800) 555--1212, Fax: (888) 555--1212}
%\IEEEauthorblockA{\IEEEauthorrefmark{4}Tyrell Inc., 123 Replicant Street, Los Angeles, California 90210--4321}}

% use for special paper notices
%\IEEEspecialpapernotice{(Invited Paper)}

\maketitle

\begin{abstract}
  We present a unification problem based on first-order syntactic unification which ask whether every problem in a schematically-defined sequence of unification problems is unifiable, so called {\em loop unification}. Alternatively, our problem may be formulated as a recursive procedure calling first-order syntactic unification on certain bindings occurring in the solved form resulting from unification.  Loop unification is closely related to \textit{Narrowing} as the schematic constructions can be seen as a rewrite rule applied during unification, and \textit{primal grammars}, as we deal with recursive term constructions. However, loop unification relaxes the restrictions put on variables  as fresh as well as used extra variables may be introduced by rewriting. In this work we consider an important special case, so called {\em semiloop  unification}. We provide a sufficient condition for unifiability of the entire sequence based on the structure of a sufficiently long initial segment. It remains an open question whether this condition is also necessary for semiloop unification and how it may be extended to loop unification. 
\end{abstract}

% no keywords

% For peer review papers, you can put extra information on the cover
% page as needed:
% \ifCLASSOPTIONpeerreview
% \begin{center} \bfseries EDICS Category: 3-BBND \end{center}
% \fi
%
% For peerreview papers, this IEEEtran command inserts a page break and
% creates the second title. It will be ignored for other modes.
\IEEEpeerreviewmaketitle

\section{Introduction}
Methods of cut-elimination based on resolution~\cite{DBLP:journals/jsc/BaazL00} allow the use of techniques from automated reasoning for formal proof transformation and analysis. When a formal proof uses inductive reasoning, proof analysis is best performed using a schematic representation of the formal proof. When performing resolution based analysis on schematically represented formal proofs, concepts such as unification must also be generalized to so called {\em unification schema}. We focus on a particular type of unification schema and its associated unification problem.    

Schematic unification, as discussed in this paper, originated from computational proof analysis using methods of cut-elimination based on resolution theorem proving~\cite{DBLP:journals/tcs/BaazHLRS08,DBLP:journals/jsc/BaazL00}. While it may seem natural to handle proof analysis in the presence of induction using a higher-order formulation, a schematic first-order formulation turned out to be more fruitful and resulted in extraction of an elementary proof from F\"{u}rstenberg's proof of the infinitude of primes~\cite{DBLP:journals/tcs/BaazHLRS08}. Earlier work in this direction concerning schematic representation was carried out using formalisms built for a specific problem. This approach is not easily generalized to analysis of a wide-range of proofs.

 In~\cite{DBLP:conf/tbillc/DunchevLRW13}, the first general purpose formalism was presented, however, the associated unification problem was not formally specified. Manual construction of the appropriate substitutions was expected~\cite{DBLP:conf/cade/CernaL16}. An improved formalism was later developed which relied on schematic automated reasoning methods for construction of the unifiers~\cite{DBLP:journals/logcom/LeitschPW17}. Note that, while these unifiers resulted from schematic constructions, they themselves were not schematically defined.
 
The automated reasoning methods for induction used in~\cite{DBLP:journals/logcom/LeitschPW17} are quite powerful in comparison to other existing methods~\cite{DBLP:books/daglib/0013358}, however, the overall method is quite weak proof analytically. This observation lead to the recent generalization of the proof analysis method introduced in~\cite{DBLP:journals/jar/CernaLL21}. This new framework introduces the formal definition of the schematic unification which motivated this  work.
 
Concerning the schematic theorem prover used in~\cite{DBLP:journals/logcom/LeitschPW17}, it represents another line of research, involving schematic representations~\cite{DBLP:conf/cade/AravantinosCP10a,DBLP:conf/time/AravantinosCP11,DBLP:journals/fuin/AravantinosEP13}. In~\cite{DBLP:journals/fuin/AravantinosEP13}, inductive theorem proving over schematic clause sets is considered, however, schematic unification, as we discuss it, was not.

The unification problem presented in this work is closely related to {\em equational narrowing}~\cite{ESCOBAR2009103} and {\em primal grammars}~\cite{HERMANN1997111}. Concerning narrowing, we can consider our schematic construction as rewrite rules that are applicable to particular constants occurring in the unification problem. However, our schematic construction allows the rewrite rules to introduce {\em extra variables} and some of these extra variables may be present in the unification problem and not necessarily present in the term upon which narrowing is being applied. This automatically implies that we would need to consider narrowing over a conditional rewrite system~\cite{DBLP:journals/scp/CholewaEM15}, however, we nonetheless break the freshness condition required by narrowing as ``used'' variables may be introduced.

Similarly for primal grammars, restricting the variables occurring on the right-hand side of rewrite rules is essential to the decidability of its unification problem. Our schematic formalization breaks the variable occurrence restriction and thus cannot be expressed as a primal grammar. 
While this implies that neither formalism is adequate for representing our unification problem, the close relationship to our work implies that research in these areas may benefit from improved understanding of schematic unification.

In this paper, we introduce a variation of the schematic unification problem discussed in~\cite{DBLP:journals/jar/CernaLL21} and an important special case when only one side of the problem is schematic, what we refer to as {\em semiloop unification}. This seems like a significant simplification, though the problem remains difficult as  complex cyclic structures may occur during  unification. While we are not able to provide a decision procedure for semiloop unification, we are able to provide a sufficient condition for unifiability and conjecture the necessity of this condition. 
\section{Preliminaries}
Let $\Var$ be a countably infinite set of variable symbols. A {\em variable class} is a pair $(Z,<)$, where $Z\subset \Var$ (countably infinite) and $<$ is a strict well-founded total linear order. Associated with each variable class $(Z,<)$ is the {\em successor function $Suc_{<}^Z(\cdot)$ of the class} which has the following properties: 
\begin{itemize}
\item if $x\in Z$, then $x < Suc_{<}^Z(x)$, and 
\item if $x,y\in Z$ and $Suc_{<}^Z(x) < Suc_{<}^Z(y)$ then $x < y$. 
\end{itemize}
 when it is clear from context, we write $Suc(\cdot)$ for $Suc_{<}^Z(\cdot)$.
To simplify notation we will consider classes of the form $\VIdx{x} = ( \{ x_i\ \vert\ i\in \Nat\}, <_{\Nat})$ where $<_{\Nat}$ is the strict well-founded total linear order of the natural numbers and $x\not \in \{ x_i\ \vert\ i\in \Nat\}$. The successor function associated with $\VIdx{x}$ is defined as $Suc(x_i) = x_{i+1}$. Note that this successor function satisfies the above properties when $x_0$ is the minimal element with respect to the ordering of $\VIdx{x}$. Let $x_i\in \VIdx{x}$, then $|x_i|=i$. Unless otherwise stated, we assume the classes $\VIdx{x}$ and $\VIdx{y}$ contain distinct variables when  $x\not = y$.

In addition to a standard first-order term signature $\Sigma$ we require a countably infinite set  of recursion variables $\RecV$. Recursion variables will be denoted using $\hat{\cdot}$.  By $\mathcal{T}(\Sigma,S,Z)$ we denote the term algebra whose members are constructed using the signature $\Sigma$, $Z = (\bigcup_{x\in X} \VIdx{x})$ where $X\subset \mathcal{V}$, and $S\subset \RecV$. For $t\in \mathcal{T}(\Sigma,S,Z)$, $\mathit{var}(t)$ denotes the set of variables occurring in $t$.

Given a term $t\in \mathcal{T}(\Sigma,S,Z)$, for $Z = (\bigcup_{x\in X} \VIdx{x})$, we can generate the {\em successor term of $t$ modulo $Z$},  by applying the {\em shift operator} which is defined recursively as follows: 
\begin{itemize}
\item $\shidx{Z}(f(t_1,\cdots, t_n)) = f(\shidx{Z}(t_1),\cdots, \shidx{Z}(t_n))$
\item for $z\in \VIdx{x}$, s.t. $x\in X$ ,  $\shidx{Z}(z) = Suc(z)$ 
\item for $\hat{a}\in S$, $\shidx{Z}(\hat{a}) =\hat{a}$
\end{itemize}
When $Z$ is clear from context we will simply write $\shidx{}(\cdot)$.

Substitutions are finite sets of pairs $\{ z_1\mapsto t_1 ,\cdots  , z_1\mapsto t_n \}$, were  $z$’s are pairwise distinct variables or recursion variables. They represent functions that map a finite number of $z_i$ to $t_i$ and any other variable to itself.  The notions of substitution domain and range are also
standard and are denoted, respectively, by $\mathit{Dom}$ and $\mathit{Ran}$. Substitutions are denoted by lower case Greek letters, while the identity substitution is denoted by $\mathit{Id}$.
We use postfix notation for substitution applications, writing $t\sigma$ instead of $\sigma(t)$. As usual, the application $t\sigma$ affects only the free occurrences of variables from $\mathit{Dom}(\sigma)$ in $t$. 
The composition of $\sigma$ and $\theta$ is written as juxtaposition $\sigma\theta$ and is defined as $x(\sigma\theta) = (x\sigma)\theta$ for all $x$. Another standard operation, restriction of a substitution $\sigma$ to a set of variables $S$, is denoted
by $\sigma\vert_S$ .

A {\em Unification problem}, denoted by $\{ t_1\unif s_1, \cdots, t_n\unif s_n\}$, is a set of pairs of terms. A unification problem $S$ is said to be in irreducible form if for every pair $t_i\unif s_i$ one of the following holds: the head symbol of $t_i$ and $s_i$ mismatch, or $t_i$ is a variable which does not occur in any other pairs of the unification problem, or $t_i$ is a variable occurring as a proper subterm of $s_i$. If a unifier can be extracted from the irreducible form than we refer to the irreducible form as a {\em solved form}. 
For more details concerning these and related concepts connected with first-order syntactic unification such as  {\em most gernal unifier (m.g.u)}, we refer the reader to~\cite{unif}.

\section{Loops, Semiloops, and Unification}
What separates loop unification from syntactic unification is that certain terms can be extended and thus produce new unification problems.  
\begin{definition}
\label{extendableAndfixed}
let $Z\subset \Var$, $s\in \mathcal{T}(\Sigma,\{\hat{a}\},(\bigcup_{z\in Z} \VIdx{z}))$, and $t\in  \mathcal{T}(\Sigma,\emptyset,(\bigcup_{z\in Z} \VIdx{z}))$. Then we refer to $s$ as {\em $(\Sigma,Z,\hat{a})$-extendable} and $t$ as {\em $(\Sigma,Z)$-fixed}. Associated with each $(\Sigma,Z,\hat{a})$-extendable term $s$ is an unary operator $\exd{\hat{a}}{s}(\cdot)$  defined  recursively as follows: 
\begin{itemize}
\item $\exd{\hat{a}}{s}(f(t_1,\cdots, t_n)) = f(\exd{\hat{a}}{s}(t_1),\cdots, \exd{\hat{a}}{s}(t_n))$
\item for $z\in Z$ and $i\in \mathbb{N}$, $\exd{\hat{a}}{s}(z_i) = z_i$
\item  $\exd{\hat{a}}{s}(\hat{a}) =s$ 
\item for $\hat{b}\in \mathcal{R}$ such that $\hat{a}\not = \hat{b}$, $\exd{\hat{a}}{s}(\hat{b}) =\hat{b}$ 
\end{itemize}
\end{definition}
\noindent Both $\shidx{}(\cdot)$ (shift operator) and $\exd{\hat{a}}{s}(\cdot)$ (extend operator) may be applied to substitutions as follows:

\begin{itemize}
\item $\shidx{}(\sigma)=\{  \shidx{}(y) \mapsto \shidx{}(t) \ \vert \ y\sigma = t\}$
\item $\exd{\hat{a}}{s}(\sigma)=\{ y \mapsto \exd{\hat{a}}{s}(t) \ \vert \  y\sigma = t \}$
\end{itemize}

\begin{definition}
\label{loops}
Let $s$ be $(\Sigma,Z,\hat{a})$-extendable, $t$ be $(\Sigma,Z,\hat{b})$-extendable. Then we refer to the pair  $\langle s,t \rangle$  as a {\em $(\Sigma,Z,\hat{a},\hat{b})$-loop}. When it is clear from context we will write loop for $(\Sigma,Z,\hat{a},\hat{b})$-loop. 
\end{definition}

Note that it is not necessary for the terms of a loop to be extendable. The important subclass we discuss in this paper, semiloops, considers the case when only one of the loop terms is extendable. 

\begin{definition}
\label{semiloops}
Let $s$ be $(\Sigma,Z,\hat{a})$-extendable and $r$ $(\Sigma,Z)$-fixed. Then we refer to the pair $\langle s,r \vert$ ($\vert r,s\rangle$)  as  a {\em left (right) $(\Sigma,Z,\hat{a})$-semiloop}. When it is clear from context we will write semiloop for $(\Sigma,Z,\hat{a})$-semiloop. 
\end{definition}

Left and right semiloops are defined symmetrically, and thus results concerning  left semiloops will hold for right semiloops. For the rest of this work, when discussing semiloops, we will exclusively consider left semiloops unless we state otherwise.

\begin{proposition}
Every $(\Sigma,Z,\hat{a})$-semiloop is a $(\Sigma,Z,\hat{a},\hat{b})$-loop.
\end{proposition}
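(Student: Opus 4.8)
The plan is to observe that being $(\Sigma,Z)$-fixed is simply the special case of being $(\Sigma,Z,\hat{b})$-extendable in which the recursion variable $\hat{b}$ happens not to occur, so that the ``fixed'' side of a semiloop can be reread verbatim as an ``extendable'' side of a loop. Concretely, I would fix an arbitrary recursion variable $\hat{b}\in\RecV$ and show that the $(\Sigma,Z)$-fixed term $r$ qualifies as $(\Sigma,Z,\hat{b})$-extendable; the pair $\langle s,r\rangle$ is then a loop by definition, since $s$ is already $(\Sigma,Z,\hat{a})$-extendable by hypothesis.

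The key step is a containment of term algebras. By Definition~\ref{extendableAndfixed}, $r$ being $(\Sigma,Z)$-fixed means $r\in\mathcal{T}(\Sigma,\emptyset,(\bigcup_{z\in Z}\VIdx{z}))$, whereas $(\Sigma,Z,\hat{b})$-extendability asks for $r\in\mathcal{T}(\Sigma,\{\hat{b}\},(\bigcup_{z\in Z}\VIdx{z}))$. Since $\emptyset\subseteq\{\hat{b}\}$ and the term-algebra constructor is monotone in its set of recursion variables — enlarging the allowed recursion variables only adds new terms and never removes old ones — we obtain the inclusion $\mathcal{T}(\Sigma,\emptyset,(\bigcup_{z\in Z}\VIdx{z}))\subseteq\mathcal{T}(\Sigma,\{\hat{b}\},(\bigcup_{z\in Z}\VIdx{z}))$. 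Hence $r$ lies in the larger algebra and is $(\Sigma,Z,\hat{b})$-extendable. Applying Definition~\ref{loops} with this $\hat{b}$ then yields that $\langle s,r\rangle$ is a $(\Sigma,Z,\hat{a},\hat{b})$-loop.

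There is essentially no obstacle here, as Definition~\ref{loops} imposes no constraint relating $\hat{a}$ and $\hat{b}$, so $\hat{b}$ may be chosen freely — in particular distinct from $\hat{a}$. The only point requiring care is notational: the semiloop is written $\langle s,r\vert$ to flag that $r$ is fixed, while the loop is written $\langle s,r\rangle$, and the proposition asserts that dropping this flag is harmless. This is consistent with the remark following Definition~\ref{loops} that the terms of a loop need not genuinely contain their recursion variables: an extendable term on which $\exd{\hat{b}}{\cdot}$ acts trivially is exactly a fixed term, so the fixed/extendable distinction collapses for the purpose of being a loop.
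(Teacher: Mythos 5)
Your proof is correct: the paper states this proposition without any proof, treating it as immediate from Definitions~1--3, and your argument spells out exactly that immediacy, namely that $\mathcal{T}(\Sigma,\emptyset,(\bigcup_{z\in Z}\VIdx{z}))\subseteq\mathcal{T}(\Sigma,\{\hat{b}\},(\bigcup_{z\in Z}\VIdx{z}))$, so a fixed term is a (degenerate) extendable term. This matches the paper's own remark that loop terms need not actually contain their recursion variables (cf.\ the degenerate Example~1(e)), so no further comparison is needed.
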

The above proposition will help us avoid repeating definitions which apply to both loops and semiloops. 

\begin{definition}
\label{extensions}
Let $\langle s,t\rangle$ be a  $(\Sigma,Z,\hat{a},\hat{b})$-loop and $n\in \Nat$. The  $n$-extension of $\langle s,t\rangle$, denoted by $\langle s,t\rangle_n$, may be defined recursively as follows: 
\begin{itemize}
\item $\langle s,t\rangle_0 = \langle \hat{a},\hat{b}\rangle$
\item $\langle s,t\rangle_{n+1} = \langle \exd{\hat{a}}{s}(\shidx{}(s')),\exd{\hat{b}}{t}(\shidx{}(t'))\rangle$ where $\langle s,t\rangle_{n} = \langle s',t'\rangle$.
\end{itemize}
\end{definition}
In the case of semiloops, application of the operators may be ignored for the fixed term, i.e.  $\langle s,t\vert_0 = \langle \hat{a},t\vert$, and $\langle s,t\vert_{n+1} = \langle \exd{\hat{a}}{s}(\shidx{}(s')),t\vert$ where $\langle s,t\vert_{n} = \langle s',t\vert$. This convention stops unnecessary shifting of variables in fixed terms, and thus, simplifies definitions and lemmas discussed later in this paper. 

\begin{proposition}
Every extension of a loop is a loop. 
\end{proposition}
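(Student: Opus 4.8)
The plan is to proceed by induction on $n$, the number of extensions, showing that for every $n$ the pair $\langle s,t\rangle_n$ is again a $(\Sigma,Z,\hat{a},\hat{b})$-loop. By Definition~\ref{loops} this reduces to proving that the first component of $\langle s,t\rangle_n$ is $(\Sigma,Z,\hat{a})$-extendable and the second is $(\Sigma,Z,\hat{b})$-extendable; that is, each lies in the corresponding term algebra of Definition~\ref{extendableAndfixed}. The base case $n=0$ is immediate, since $\langle s,t\rangle_0 = \langle \hat{a},\hat{b}\rangle$ and a lone recursion variable $\hat{a}$ is a member of $\mathcal{T}(\Sigma,\{\hat{a}\},\bigcup_{z\in Z}\VIdx{z})$, and symmetrically for $\hat{b}$.

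For the inductive step, write $\langle s,t\rangle_n = \langle s',t'\rangle$ and assume $s'$ is $(\Sigma,Z,\hat{a})$-extendable and $t'$ is $(\Sigma,Z,\hat{b})$-extendable. By Definition~\ref{extensions} the first component of $\langle s,t\rangle_{n+1}$ is $\exd{\hat{a}}{s}(\shidx{}(s'))$, so it suffices to establish two closure facts and then invoke the obvious role-swap for the second component. \emph{Closure under shift:} if a term $u$ is $(\Sigma,Z,\hat{a})$-extendable, then so is $\shidx{}(u)$. This follows by structural induction on $u$ from the defining clauses of the shift operator: function symbols are preserved, each indexed variable $z_i$ is sent to $Suc(z_i)$, which remains in the same class $\VIdx{z}\subseteq\bigcup_{z\in Z}\VIdx{z}$, and $\hat{a}$ is left fixed, so no symbol outside $\Sigma\cup\{\hat{a}\}\cup\bigcup_{z\in Z}\VIdx{z}$ is ever produced. \emph{Closure under extension:} if $u$ is $(\Sigma,Z,\hat{a})$-extendable and $s$ is the $(\Sigma,Z,\hat{a})$-extendable term defining $\exd{\hat{a}}{s}$, then $\exd{\hat{a}}{s}(u)$ is $(\Sigma,Z,\hat{a})$-extendable. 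Again by structural induction: the operator preserves function symbols and indexed variables and replaces each occurrence of $\hat{a}$ by $s$; since $s$ itself lies in $\mathcal{T}(\Sigma,\{\hat{a}\},\bigcup_{z\in Z}\VIdx{z})$, every substituted copy introduces only permitted symbols.

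Combining the two facts, $\shidx{}(s')$ is $(\Sigma,Z,\hat{a})$-extendable by closure under shift, whence $\exd{\hat{a}}{s}(\shidx{}(s'))$ is $(\Sigma,Z,\hat{a})$-extendable by closure under extension, settling the first component. The argument for $t'$ is identical with $\hat{a},\hat{b},s$ replaced by $\hat{b},\hat{a},t$, which closes the induction. For a semiloop the claim is even simpler on the fixed side: by the stated convention $\langle s,t\vert_{n+1}$ leaves the second component equal to $t$, which remains $(\Sigma,Z)$-fixed and hence a valid loop component via the earlier proposition relating semiloops to loops.

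The steps are all routine structural inductions; the only point requiring care is the bookkeeping of admissible recursion variables. One must verify that neither operator can introduce a recursion variable distinct from $\hat{a}$ (respectively $\hat{b}$) into its component: the shift operator never touches recursion variables, and the extend operator only ever substitutes the fixed term $s$ (resp. $t$), whose recursion-variable set is exactly $\{\hat{a}\}$ (resp. $\{\hat{b}\}$). Maintaining this invariant is precisely what guarantees membership in the single-recursion-variable algebra $\mathcal{T}(\Sigma,\{\hat{a}\},\bigcup_{z\in Z}\VIdx{z})$ demanded by extendability, and hence that each extension is genuinely a loop rather than merely a pair of terms.
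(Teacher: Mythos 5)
Your proof is correct: the paper states this proposition without any proof, treating it as an immediate observation, and your structural induction (closure of the single-recursion-variable term algebra under the shift and extend operators, plus the base case $\langle \hat{a},\hat{b}\rangle$) is precisely the routine argument being taken for granted. The only cosmetic quibble is that the recursion-variable set of $s$ is a subset of $\{\hat{a}\}$ rather than exactly $\{\hat{a}\}$ (since $\hat{a}$ need not occur), but this does not affect the argument.
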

This observation implies the following: If $\langle s,t\rangle_{k} = \langle s',t'\rangle$, then there exists operators, $\exd{\hat{a}}{s'}(\cdot)$ and $\exd{\hat{b}}{t'}(\cdot)$, for $\hat{a}$ occurring in $s'$ and $\hat{b}$ occuring in $t'$ . This will be essential for proving certain results. 
\begin{example}
The following are  $(\{h\},\{\VIdx{x},\VIdx{y}\},\hat{a},\hat{b})$-loops: 
\begin{itemize}
\item[a)] $\langle h(h(x_1,h(x_1,h(x_1,x_1))),\hat{a})\ , \ h(h(h(y_{1},y_{2}),y_{4}),\hat{b})\rangle$
\item[b)] $\langle h(h(x_6,h(x_1,x_6)),\hat{a})\ ,\ h(\hat{b},h(y_1,h(y_2,y_1)))\rangle$
\item[c)] $\langle h(h(x_6,h(\hat{a},x_6)),x_1)\ ,\ h(y_1,h(\hat{b},h(y_2,y_1)))\rangle$
\item[d)] $\langle h(h(x_6,h(x_1,x_6)),\hat{a})\ ,\ h(y_1,h(y_2,,y_1))\vert$
\item[e)] $\langle h(h(x_6,h(x_1,x_6)),x_4)\ ,\ h(y_1,h(y_2,y_1))\rangle$
\end{itemize}
Example $d)$ is a $(\{h\},\{\VIdx{x},\VIdx{y}\},\hat{a})$-semiloop and example $e)$ is a pair of first-order terms. Note that example $e)$ may be considered
a  $(\{h\},\{\VIdx{x},\VIdx{y}\},\hat{a},\hat{b})$-loop or a $(\{h\},\{\VIdx{x},\VIdx{y}\},\hat{a})$-semiloop. In either case, it is a degenerate example. All extensions of 
example $e)$ are equivalent by the convention mentioned earlier (we do not apply operators to fixed terms).

\end{example}

\begin{example}
Consider the following Semiloop $\langle s,t\vert =\langle h(h(x_6,h(x_1,x_6)),\hat{a})\ ,\ h(y_1,h(y_2,y_1))\vert$. We define $t(n) = h(x_{n+6},h(x_{n+1} ,x_{n+6}))$. It's first few extensions are as follows:
\begin{itemize}
\item $\langle s,t\vert_0 = \langle \hat{a}\ ,\ h(y_1,h(y_2,y_1)) \vert$
\item $\langle s,t\vert_1 = \langle h(h(x_6,h(x_1,x_6)), \hat{a}),\ h(y_1,h(y_2,y_1))\vert$
\item $\langle s,t\vert_2 = \langle h(t(1),h(t(0),\hat{a})),\ h(y_1,h(y_2,y_1)) \vert$
\item $\langle s,t\vert_3 = \langle (h(t(2),h(t(1),h(t(0),\hat{a}))),\ h(y_1,h(y_2,y_1)) \vert$
\end{itemize}
\end{example}

Recursion variables allow the construction of loops extensions through substitution by a predefined term. Concerning unification, we will treat recursion variables as a special type of variable with the following restrictions: Let $\langle s,t\rangle$ be a loop and $S$ the solved form of $s\unif t$. Let $\sigma$ be an m.g.u constructed from $S$, then we assume for any $\hat{a}\in \RecV$, $\hat{a}\sigma =t$ iff  $t\not \in \Var$. The idea behind this restriction is that variables have a higher precedence than recursion variables with respect to unification. This restriction will be helpful when defining a sufficient condition for unifiability. 
\begin{definition}
Let $\langle s,t\rangle$ be a $(\Sigma,Z,\hat{a},\hat{b})$-loop such that  solving $s\unif t$ results in an m.g.u. $\sigma$. We refer to $\langle s,t\rangle$  as {\em extendably unifiable} if  $\mathit{Dom}(\sigma)\cap \{\hat{a},\hat{b}\} \not = \emptyset$ .
\end{definition}
Observe that extendably unifiable implies that an unaddressed unification problem exists in the solved form resulting from the unification problem $s\unif t$. These unaddressed problems may be introduced by extending $s$, $t$ or both. 
\begin{example}
Consider the following semiloop, $$\langle h(h(x_6,h(x_1,x_6)),\hat{a})\ ,\ h(y_1,h(y_2,y_1))\vert.$$ 
The solved form of the induced unification problem  $h(h(x_6,h(x_1,x_6)),\hat{a})\unif h(y_1,h(y_2,y_1))$ contains:
$$\{ y_1\unif h(x_6,h(x_1,x_6)) \ , \ \hat{a}\unif h(y_2,h(x_6,h(x_1,x_6)))\}$$
and thus is extendably unifiable. However, a slight variation of the loop, namely $\langle h(h(x_6,h(x_1,x_6)),\hat{a})\ ,\ h(y_1,y_2)\vert$, is not extendably unifiable as its solved form contains 
$$\{ y_1\mapsto h(x_6,h(x_1,x_6)) \ , \ y_2 \mapsto \hat{a}\}.$$
\end{example}

\begin{definition}[Loop Unification Problem]
Let $X\subseteq Z$. Given a $(\Sigma,Z,\hat{a},\hat{b})$-loop $\langle s,t\rangle$ such that  $s$ is $(\Sigma,(Z\setminus X),\hat{a})$-extendable and $t$ is $(\Sigma,X ,\hat{b})$-extendable, the  {\em loop unification problem}, denoted by $s\unifl t$, is the problem of deciding if every extension of $\langle s,t\rangle$ is unifiable. We refer to such loops as {\em loop unifiable}. 
\end{definition}
\begin{example}
Let us consider the semiloop: 
$\langle s, t \vert =$ $$\langle h(h(h(x_2,h(x_1,x_1)),x_3),\hat{a}), \ h(h(y_4,y_3),h(y_1,y_2)) \vert$$
Given that the variables of $s$ and $t$ are disjoint  we can define the loop unification problem $s\unifl t$. $\langle s, t \vert_0$ is always unifiable so we can ignore it. $\langle s, \ t \vert_1$ has the following unifier: 
$$\{ y_3\mapsto x_3 \ , \ y_4 \mapsto h(x_2,h(x_1,x_1)) \ , \ \hat{a}\mapsto h(y_1,y_2) \}.$$
Thus, $\langle s,  t \vert_1$ is extendably unifiable. What about $\langle s,  t \vert_2$?
$$ \langle  h(t(1),h(t(0),\hat{a}))\ , \ h(h(y_4,y_3),h(y_1,y_2)) \vert$$
where $t(n) = h(h(x_{n+2},h(x_{n+1},x_{n+1})),x_{n+3})$. It has the following unifier:
$$ \begin{array}{l} 
\{ y_3\mapsto x_4 ,\   y_4 \mapsto h(x_3,h(x_2,x_2)),\\  y_1 \mapsto h(h(x_2,h(x_1,x_1)),x_3),\ y_2 \mapsto \hat{a}  \}
\end{array}$$
Notice that this extension is not extendably unifiable. From this unifier we can build a unifier for every extension greater than 2 (See Theorem~\ref{finiteunif}) and thus, this example is loop unifiable. 
\end{example}

\begin{example}
Let us consider the semiloop: $\langle s,t\vert = $
$$\langle h(h(h(x_2,x_1),h(x_2,x_3)),\hat{a})\ , \ h(h(y_3,y_1),h(y_4,y_4)) \vert$$
Once again the variables of $s$ and $t$ are disjoint and we can define the loop unification problem $s\unifl t$. $\langle s,t\vert_1$ has the following unifier: 
$$\{ y_3\mapsto h(x_2,x_1) \ , \ y_1 \mapsto h(x_2,x_3) \ , \ \hat{a}\mapsto h(y_4,y_4) \}.$$
Thus, $\langle s,t\vert_1$ is extendably unifiable. What about the $\langle s,t\vert_2$?
$$\langle h(t(1),h(t(0),\hat{a}))\ , \ h(h(y_3,y_1),h(y_4,y_4))\vert$$
where $t(n) = h(h(x_{n+2},x_{n+1}),h(x_{n+2},x_{n+3}))$. It has the following unifier:
$$\begin{array}{l}\{ y_3\mapsto h(x_3,x_2), \  y_4 \mapsto h(h(x_2,x_1),h(x_2,x_3)), \\ 
 y_1 \mapsto h(x_3,x_4), \ \hat{a} \mapsto h(h(x_2,x_1),h(x_2,x_3)) \}.\end{array}$$
Thus $\langle s,t\vert_2$ is extendably unifiable. Now consider $\langle s,t\vert_3:$
$$\langle h(t(2),h(t(1),h(t(0),\hat{a})))\ , \ h(h(y_3,y_1),h(y_4,y_4))\vert$$
Notice that the irreducible form contains an occurrence check on the variable $x_2$ and thus $\langle s,t\vert_3$ is not unifiable. 
$$\begin{array}{l} \{ y_3\unif h(x_4,x_3), \ y_4 \unif h(h(x_3,x_2),h(x_3,x_4)), \ \\  \ y_1 \unif h(x_4,x_5), \  \hat{a} \unif  h(x_3,x_4), \ x_3 \unif h(x_2,x_1), \\ \ x_2 \unif h(x_2,x_3) \}.\end{array}$$
\end{example}
In both of the above examples only a finite number of extensions are extendably unifiable. However, this need not be the case. We can distinguish types of loop unifiability based on the number of extensions which are extendably unifiable. 

\begin{definition}
Let the loop $\langle s,t\rangle$ of $s \unifl t$ be loop unifiable. We say  $\langle s,t\rangle$ is  {\em infinitely loop unifiable} if for every $n\in \mathbb{N}$, $\langle s,t\rangle_n$ is {\em extendably unifiable}. Otherwise, we say $\langle s,t\rangle$  is finitely  loop unifiable.
\end{definition}
\begin{example}
The following is a simple example of a semiloop which is infinitely loop unifiable:
$$\langle s,t\vert = \langle h(h(x_1,x_1),\hat{a})\ , \ h(y_1,y_1)\vert$$
Notice that the solved form of  every extension  will contain either $\hat{a} \unif h(y_1,y_1)$ or $\hat{a}\unif h(x_1,x_1)$. Thus, every extension is extendably unifiable. 
\end{example}
Interestingly, even somewhat simple examples can lead to non-trivial infinite loop unifiability:
\begin{example}
Consider the semiloop 
$ \langle s, t\vert =$ $$\langle h(\hat{a},t(0)), \  h(h(h(h(y_1,y_1),y_1),y_1),y_1) \vert$$
where $t(n) = h(h(h(x_{n+1},x_{n+1}),x_{n+1}),x_{n+1})$ 
While this may not be clear from simple observation, for $n\geq 1$, the solved form of  $\langle s,t\vert_{3n}$ will contain  
$\hat{a} \unif h(h(t(1),t(1)),t(1)),$
the solved form of  $\langle s,t\vert_{3n+1}$ will contain  $\hat{a} \unif h(t(1),t(1)),$
the solved form of  $\langle s,t\vert_{3n+2}$ will contain  $\hat{a} \unif t(1).$
This pattern repeats for all $m$-extensions where $m>2$. 
\end{example}
In the following section we provide a  sufficient condition for finite unifiability of semiloops. Note that we only consider the case when the semiloop is constructed from two variable classes, one for the left term and one for the right term. This simple case is already non-trivial and provides interesting results which would be obfuscated by the combinatorial complexity of handling more classes. We leave the handling of a finite number of variable classes as well as full loop unification to future work. 

\section{A Sufficient Condition for Finite Unifiability of Semiloops}
In this section we provide a sufficient condition for finite unifiability of semiloops: Given a semiloop $\langle s,t\vert$, if $\langle s,t\vert_n$ satisfies the condition and for all $0\leq j\leq n$,  $\langle s,t\vert_j$ is unifiable, then we can build a unifier for any extension. For the remainder of this section we will consider $\langle s,t\vert$ to be a $(\sigma,\{x,y\} ,\hat{a})$-semiloop. The semiloop of $\langle s,t\vert_k$ will be denoted by $\langle s_k,t\vert$, and, if unifiable, its m.g.u. will be denoted by $\sigma_k$. We assume variables are not renamed in the process of constructing $\sigma_k$. Furthermore, we assume that $\mathit{var}(s) \subset \VIdx{x}\cup \{\hat{a}\}$ and $\mathit{var}(t) \subset \VIdx{y}$. Under these conditions we can construct the loop unification problem $s\unifl t$. 
\begin{lemma}
\label{shiftunif}
If $\langle s_k,t\vert$ is unifiable, then $\shidx{}(s_k)\unif t$ is unifiable. 
\end{lemma}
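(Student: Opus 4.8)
The plan is to observe that, restricted to $s_k$, the shift operator is nothing more than an injective variable renaming, and that it leaves $t$ completely untouched; the shifted problem is therefore a renamed copy of the original, and unifiability is preserved. No genuinely new idea is needed, only the disjointness of the two variable classes.

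First I would record the structural facts. By assumption $\mathit{var}(s)\subseteq \VIdx{x}\cup\{\hat{a}\}$ and $\mathit{var}(t)\subseteq \VIdx{y}$, and a straightforward induction on $k$ using Definition~\ref{extensions} shows $\mathit{var}(s_k)\subseteq \VIdx{x}\cup\{\hat{a}\}$: the base case is $s_0=\hat{a}$, and in the step $s_{k+1}=\exd{\hat{a}}{s}(\shidx{}(s_k))$ the shift sends each $x_i$ to $x_{i+1}\in\VIdx{x}$ and fixes $\hat{a}$, while $\exd{\hat{a}}{s}$ replaces $\hat{a}$ by $s$, whose variables again lie in $\VIdx{x}\cup\{\hat{a}\}$. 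Since $\VIdx{x}\cap\VIdx{y}=\emptyset$, the variables of $s_k$ and of $t$ are disjoint. Because $s_k$ contains no variable of $\VIdx{y}$, the operator $\shidx{}(\cdot)$ acts on $s_k$ exactly as the substitution $\rho=\{x_i\mapsto x_{i+1}\mid i\in\Nat\}$, which renames $\VIdx{x}$ injectively and fixes $\hat{a}$; in particular $\shidx{}(s_k)=s_k\rho$.

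Next I would transport a unifier along $\rho$. Suppose $\sigma$ unifies $s_k\unif t$, so $s_k\sigma=t\sigma$, and introduce the partial left inverse $\rho^{-}=\{x_{i+1}\mapsto x_i\mid i\in\Nat\}$, which satisfies $z(\rho\rho^{-})=z$ for every $z\in\VIdx{x}$ and fixes $\hat{a}$ as well as every variable of $\VIdx{y}$. Setting $\sigma'=\rho^{-}\sigma$, and using $\mathit{var}(s_k)\subseteq\VIdx{x}\cup\{\hat{a}\}$, we get $s_k\rho\rho^{-}=s_k$, hence $\shidx{}(s_k)\sigma'=s_k\rho\rho^{-}\sigma=s_k\sigma$. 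Likewise $\rho^{-}$ fixes every variable of $t$, so $t\sigma'=t\rho^{-}\sigma=t\sigma$. Combining, $\shidx{}(s_k)\sigma'=s_k\sigma=t\sigma=t\sigma'$, so $\sigma'$ unifies $\shidx{}(s_k)\unif t$, as required.

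I expect essentially no obstacle beyond this bookkeeping. The only subtlety worth flagging is that $\shidx{}$ is injective but not surjective on $\VIdx{x}$ (nothing maps to $x_0$), which is why I compose with a left inverse $\rho^{-}$ rather than appeal to a bijective renaming. Equivalently, one may note that $\shidx{}(s_k)\unif t=(s_k\unif t)\rho$, since $t\rho=t$, and invoke invariance of unifiability under injective variable renamings; the explicit construction above is precisely what certifies that invariance in the direction we need.
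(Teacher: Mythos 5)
Your proof is correct and takes essentially the same route as the paper's one-line argument: the shift restricted to $s_k$ is an injective variable renaming of $\VIdx{x}$ that fixes $\mathit{var}(t)\subseteq\VIdx{y}$ and $\hat{a}$, so unifiability is invariant under it. The only cosmetic difference is that the paper exhibits the shifted substitution $\shidx{}(\sigma_k)$ directly as the m.g.u.\ of $\shidx{}(s_k)\unif t$ (the form its later corollaries reuse), whereas you transport $\sigma_k$ backwards via the left inverse $\rho^{-}$ to get the unifier $\rho^{-}\sigma_k$; both certify the same renaming invariance, and yours suffices since the lemma asserts only unifiability.
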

\begin{proof}
$\shidx{}(\cdot)$ is essentially a variable renaming, thus,  $\shidx{}(\sigma_k)$ is the m.g.u. 
\end{proof}

\begin{lemma}
\label{exunif}
If $\langle s_k,t\rangle$ is unifiable and the following conditions hold:
\begin{itemize}
\item $\hat{a}\not \in \mathit{Dom}(\sigma_k)$
\item for all $z\in  \mathit{Dom}(\sigma_k)$ s. t.  $\hat{a}\in \mathit{var}(z\sigma_k)$, $z\not \in \mathit{var}(s\sigma_k)$.
\end{itemize}
then $\exd{\hat{a}}{s}(s_k)\unif t$ is unifiable  by the m.g.u $\{ z\mapsto \exd{\hat{a}}{s\sigma_k}(r)\ \vert\ z\sigma_k = r\}$. 
\end{lemma}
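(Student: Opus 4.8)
The plan is to verify directly that the proposed substitution $\tau := \{ z\mapsto \exd{\hat{a}}{s\sigma_k}(r)\mid z\sigma_k = r\}$ is a most general unifier of $\exd{\hat{a}}{s}(s_k)\unif t$, and the technical heart of the argument is a single commutation identity between the extend operator and substitution application. First I would prove, by induction on the structure of an arbitrary term $u$ built over $\Sigma$, $\{\hat{a}\}$ and the relevant variable classes, that
\[
  \exd{\hat{a}}{s}(u)\,\tau \;=\; \exd{\hat{a}}{s\sigma_k}(u\sigma_k).
\]
The function-symbol case is immediate since every operator involved is defined homomorphically, and the variable case splits cleanly: if $z\notin\Dom{\sigma_k}$ both sides reduce to $z$, and if $z\in\Dom{\sigma_k}$ both sides reduce to $\exd{\hat{a}}{s\sigma_k}(z\sigma_k) = z\tau$ by the very definition of $\tau$.

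The one delicate base case is $u=\hat{a}$, where the left-hand side is $s\tau$ and, using condition~1 (so that $\hat{a}\sigma_k=\hat{a}$ since $\hat{a}\notin\Dom{\sigma_k}$), the right-hand side is $s\sigma_k$. Thus the whole identity rests on the equality $s\tau = s\sigma_k$, and this is exactly where the two hypotheses enter: $s\tau$ and $s\sigma_k$ can differ only at a variable $z\in\mathit{var}(s)\cap\Dom{\sigma_k}$ with $\hat{a}\in\mathit{var}(z\sigma_k)$, since for every other variable $z\tau=z\sigma_k$. Condition~2 is precisely the statement that such $\hat{a}$-carrying variables do not survive into $s\sigma_k$; I would argue, combining idempotency of the extracted m.g.u.\ with condition~2, that no such $z$ actually occurs in the copy of $s$ plugged into the $\hat{a}$-hole, so that $\exd{\hat{a}}{s\sigma_k}(\cdot)$ acts trivially on each $z\sigma_k$ arising from $s$ and the two instances coincide. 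I expect this to be the main obstacle: pinning down exactly why condition~2, phrased in terms of $\mathit{var}(s\sigma_k)$ rather than $\mathit{var}(s)$, suffices to rule out the interaction that would otherwise break the commutation and, correspondingly, introduce an occur-check failure in $\exd{\hat{a}}{s}(s_k)\unif t$.

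With the commutation identity in hand, the unifier claim is short. Since $t$ is $(\Sigma,Z)$-fixed it contains no occurrence of $\hat{a}$, so $t\tau = \exd{\hat{a}}{s\sigma_k}(t\sigma_k)$ by the same computation restricted to the variable cases. Instantiating the identity at $u=s_k$ and using that $\sigma_k$ unifies $s_k\unif t$ (so $s_k\sigma_k=t\sigma_k$) gives
\[
  \exd{\hat{a}}{s}(s_k)\,\tau \;=\; \exd{\hat{a}}{s\sigma_k}(s_k\sigma_k) \;=\; \exd{\hat{a}}{s\sigma_k}(t\sigma_k) \;=\; t\tau,
\]
so $\tau$ is a unifier.

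Finally, for most generality I would transfer the corresponding property of $\sigma_k$. The idea is that the extend operator, restricted to the $\hat{a}$-occurrence, behaves like a retraction: any unifier $\theta$ of $\exd{\hat{a}}{s}(s_k)\unif t$ induces a unifier $\theta'$ of $s_k\unif t$ obtained by re-abstracting the plugged copy of $s$ back to $\hat{a}$ (a step legitimate precisely because condition~1 keeps $\hat{a}$ a free placeholder outside $\Dom{\sigma_k}$), whence $\sigma_k\le\theta'$ by most generality of $\sigma_k$, and pushing this comparison back through $\exd{\hat{a}}{s\sigma_k}(\cdot)$ yields $\tau\le\theta$. Making the re-abstraction precise, since it is not a literal inverse of $\exd{\hat{a}}{s}(\cdot)$, is the part of the most-generality argument that needs the most care.
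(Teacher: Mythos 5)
Your strategy---directly verifying the candidate $\tau=\{z\mapsto \exd{\hat{a}}{s\sigma_k}(r)\ \vert\ z\sigma_k=r\}$ via the commutation identity $\exd{\hat{a}}{s}(u)\tau=\exd{\hat{a}}{s\sigma_k}(u\sigma_k)$---is a rigorous rendering of the paper's own two-sentence argument (which likewise only observes that each binding $\exd{\hat{a}}{s\sigma_k}(r)$ is free of $z$ and that $\exd{\hat{a}}{s}(s_k)$ agrees with $s_k$ except at $\hat{a}$-positions), and you located the crux exactly: everything reduces to the base case $u=\hat{a}$, i.e.\ to $s\tau=s\sigma_k$, which requires that no $z\in\mathit{var}(s)\cap\mathit{Dom}(\sigma_k)$ has $\hat{a}\in\mathit{var}(z\sigma_k)$. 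The genuine gap is that this cannot be obtained from the stated hypotheses, so the bridge you propose (``idempotency of the extracted m.g.u.\ combined with condition~2'') cannot be built. Condition~2 speaks of $\mathit{var}(s\sigma_k)$, and for any idempotent $\sigma_k$ one has $\mathit{Dom}(\sigma_k)\cap\mathit{var}(s\sigma_k)=\emptyset$ automatically; so for such $\sigma_k$ condition~2 is vacuous and places no restriction whatsoever on which variables of $s$ itself carry $\hat{a}$ in their bindings. Idempotency is precisely what empties condition~2 of content, rather than what lets you exploit it.

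In fact the lemma as stated is false, and the paper's own Example~\ref{exampleProblem} witnesses this already at $k=1$: there $s=s_1=h(x_2,h(x_4,\hat{a}))$, $t=h(y_1,y_1)$, and $\sigma_1=\{y_1\mapsto h(x_4,\hat{a}),\ x_2\mapsto h(x_4,\hat{a})\}$ is idempotent, $\hat{a}\notin\mathit{Dom}(\sigma_1)$, and $\mathit{var}(s\sigma_1)=\{x_4\}$ contains neither $y_1$ nor $x_2$, so both hypotheses hold; yet $\exd{\hat{a}}{s}(s_1)=h(x_2,h(x_4,h(x_2,h(x_4,\hat{a}))))$ does not unify with $h(y_1,y_1)$ at all, since unification forces $x_2\unif y_1$ and $y_1\unif h(x_4,h(x_2,h(x_4,\hat{a})))$, an occurrence check on $x_2$. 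Hence no completion of your argument (or of the paper's) exists under these hypotheses: checking that each binding of $\tau$ is individually occur-check free ignores the variable sharing created between distinct positions of $\exd{\hat{a}}{s}(s_k)$ once $s$ is plugged into the $\hat{a}$-holes, and that is exactly what breaks here. The repair is to strengthen condition~2 to ``$z\notin\mathit{var}(s)$'' for every $\hat{a}$-carrying $z\in\mathit{Dom}(\sigma_k)$: then $z\tau=z\sigma_k$ for every $z\in\mathit{var}(s)$, your base case $s\tau=s\sigma_k$ holds, the induction closes, and $\tau$ is a unifier. This is in effect the correction the paper itself makes in Theorem~\ref{finiteunif}, whose condition $|\shidx{}(z)|>\max_{x\in\mathit{var}(s\theta)}|x|$ keeps every $\hat{a}$-carrying variable strictly above all variables that can appear in any shifted copy of $s$. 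Your most-generality sketch would also need to be made precise, but that is moot until the unifier claim itself is salvaged.
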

\begin{proof}
This follows from the fact that $z$ does not occur in $\exd{\hat{a}}{s\sigma_k}(r)$ and the structure of the term at every position, except positions at which $\hat{a}$ occurs, is the same in $\exd{\hat{a}}{s}(s_k)$ and $s_k$.
\end{proof}

\begin{corollary}
\label{shiftexunif}
If $\langle s_k,t\rangle$ is unifiable and the following conditions hold:
\begin{itemize}
\item $\hat{a}\not \in \mathit{Dom}(\sigma_k)$
\item for all $z\in  \mathit{Dom}(\sigma_k)$ s.t. $\hat{a}\in \mathit{var}(z\sigma_k)$, $\shidx{}(z)\not \in \mathit{var}(s\theta)$ where  $\theta = \shidx{}(\sigma_k)$
\end{itemize}
then $\exd{\hat{a}}{s}(\shidx{}(s_k))\unif t$ is unifiable  by the m.g.u $\{ \shidx{}(z)\mapsto \exd{\hat{a}}{s\theta}(\shidx{}(r))\ \vert\  z\sigma_k= r\}$.
\end{corollary}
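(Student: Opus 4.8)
The plan is to derive Corollary~\ref{shiftexunif} by composing the two preceding results rather than by reasoning about the term structure directly. The key observation is that $\exd{\hat{a}}{s}(\shidx{}(s_k))$ is obtained from $s_k$ by first shifting and then extending, so I would try to feed the output of Lemma~\ref{shiftunif} into Lemma~\ref{exunif} after verifying that the shifted problem still satisfies the hypotheses of the latter.

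First I would invoke Lemma~\ref{shiftunif}: since $\langle s_k,t\vert$ is unifiable with m.g.u. $\sigma_k$, the problem $\shidx{}(s_k)\unif t$ is unifiable, and its m.g.u. is $\theta = \shidx{}(\sigma_k)$. The idea is now to apply Lemma~\ref{exunif} to the \emph{shifted} semiloop, i.e. with $\shidx{}(s_k)$ playing the role of $s_k$ and $\theta$ playing the role of $\sigma_k$. To do this I must check the two hypotheses of Lemma~\ref{exunif} in the shifted setting. For the first, I need $\hat{a}\notin \Dom(\theta)$; since $\theta = \shidx{}(\sigma_k)$ and the shift operator fixes recursion variables (by the definition of $\shidx{}$ on $S$) and acts on $\Dom$ by renaming $z\mapsto \shidx{}(z)$, the domain of $\theta$ is $\{\shidx{}(z)\mid z\in\Dom(\sigma_k)\}$, which contains no recursion variable precisely because $\hat a\notin\Dom(\sigma_k)$. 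For the second hypothesis I need: for all $w\in\Dom(\theta)$ with $\hat a\in\mathit{var}(w\theta)$, $w\notin\mathit{var}(s\theta)$. Writing $w=\shidx{}(z)$ for $z\in\Dom(\sigma_k)$, the condition $\hat a\in\mathit{var}(w\theta)=\mathit{var}(\shidx{}(z\sigma_k))$ is equivalent to $\hat a\in\mathit{var}(z\sigma_k)$ (again because $\shidx{}$ preserves occurrences of $\hat a$), and then the corollary's second hypothesis delivers exactly $\shidx{}(z)\notin\mathit{var}(s\theta)$. So both hypotheses transfer cleanly.

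Having verified the hypotheses, Lemma~\ref{exunif} applied to $\shidx{}(s_k)$ yields that $\exd{\hat{a}}{s}(\shidx{}(s_k))\unif t$ is unifiable with m.g.u. $\{\,w\mapsto \exd{\hat{a}}{s\theta}(r') \mid w\theta = r'\,\}$; substituting $w=\shidx{}(z)$ and $r'=\shidx{}(z\sigma_k)=\shidx{}(r)$ where $z\sigma_k=r$ rewrites this set as $\{\,\shidx{}(z)\mapsto \exd{\hat{a}}{s\theta}(\shidx{}(r)) \mid z\sigma_k=r\,\}$, which is exactly the claimed m.g.u. One subtlety I would flag is that Lemma~\ref{exunif} is stated with $s$ as the extension term (it produces $\exd{\hat a}{s\sigma_k}$), whereas here the extension term should also be the \emph{original} $s$, not $\shidx{}(s)$; since $s$ is $(\Sigma,\{x\},\hat a)$-extendable and the extend operator fixes indexed variables, the term plugged in for $\hat a$ is $s$ regardless, and the composition $s\theta$ replaces the $s\sigma_k$ of the lemma without any shift of $s$ itself, matching the statement.

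The main obstacle I anticipate is purely bookkeeping: confirming that the two actions commute in the sense required, specifically that $\shidx{}$ commutes with occurrence of $\hat a$ and with $\Dom$/range renaming, so that each hypothesis of Lemma~\ref{exunif} is the verbatim shifted image of the corollary's hypotheses. These facts follow directly from the recursive definitions of $\shidx{}(\cdot)$ and $\shidx{}(\sigma)$ in the preliminaries, so no genuinely new argument is needed; the corollary is essentially Lemma~\ref{shiftunif} followed by Lemma~\ref{exunif}, and I would present it as such with the index-matching spelled out.
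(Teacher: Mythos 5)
Your proposal is correct and takes exactly the paper's route: the paper's own proof is literally ``By Lemma~\ref{shiftunif}~\&~\ref{exunif}'', i.e.\ shift first, then extend. You have merely spelled out the hypothesis-transfer bookkeeping that the paper leaves implicit, and your index-matching of the resulting m.g.u.\ is accurate.
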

\begin{proof}
By Lemma~\ref{shiftunif}~\&~\ref{exunif}.
\end{proof}
\begin{corollary}
\label{missingx}
Let $k\geq 1$. If  $\langle s_k,t\vert$ is unifiable, but not extendably unifiable, then $\langle s_{k+1},t\vert$ is not extendably unifiable. 
\end{corollary}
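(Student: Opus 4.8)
The plan is to reduce the statement to the constructions already developed in Lemma~\ref{exunif} and Corollary~\ref{shiftexunif}, splitting on whether the side condition of Corollary~\ref{shiftexunif} holds. Throughout, recall that since $\langle s_k,t\vert$ is assumed unifiable, ``not extendably unifiable'' is equivalent to $\hat{a}\notin\mathit{Dom}(\sigma_k)$; this is precisely the first hypothesis of both Lemma~\ref{exunif} and Corollary~\ref{shiftexunif}, so it is available for free. I write $\theta=\shidx{}(\sigma_k)$, which by Lemma~\ref{shiftunif} is an m.g.u.\ of $\shidx{}(s_k)\unif t$, and I note $\hat{a}\notin\mathit{Dom}(\theta)$ because $\shidx{}(\cdot)$ fixes $\hat{a}$ and sends every other variable to a variable distinct from $\hat{a}$. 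Since $s_{k+1}=\exd{\hat{a}}{s}(\shidx{}(s_k))$, the problem $s_{k+1}\unif t$ is exactly the one addressed by Corollary~\ref{shiftexunif}.

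First I would treat the case in which the second hypothesis of Corollary~\ref{shiftexunif} holds, i.e.\ $\shidx{}(z)\notin\mathit{var}(s\theta)$ for every $z\in\mathit{Dom}(\sigma_k)$ with $\hat{a}\in\mathit{var}(z\sigma_k)$. Then Corollary~\ref{shiftexunif} applies directly and $s_{k+1}\unif t$ is unifiable with m.g.u.\ $\sigma_{k+1}=\{\shidx{}(z)\mapsto\exd{\hat{a}}{s\theta}(\shidx{}(r))\mid z\sigma_k=r\}$. Its domain is $\{\shidx{}(z)\mid z\in\mathit{Dom}(\sigma_k)\}$, and since $\hat{a}=\shidx{}(z)$ would force $z=\hat{a}\notin\mathit{Dom}(\sigma_k)$, we obtain $\hat{a}\notin\mathit{Dom}(\sigma_{k+1})$. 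Hence $\langle s_{k+1},t\vert$ is unifiable but not extendably unifiable, as required.

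The remaining case is when the side condition fails: there is some $z\in\mathit{Dom}(\sigma_k)$ with $\hat{a}\in\mathit{var}(z\sigma_k)$ and $\shidx{}(z)\in\mathit{var}(s\theta)$. Here I would argue that $s_{k+1}\unif t$ is \emph{not unifiable}, which again yields ``not extendably unifiable''. The idea is that the binding of $\shidx{}(z)$ is forced: in $\shidx{}(s_k)\unif t$ the m.g.u.\ $\theta$ sets $\shidx{}(z)\theta=\shidx{}(z\sigma_k)=:R$ with $\hat{a}\in\mathit{var}(R)$, and passing from $\shidx{}(s_k)$ to $s_{k+1}$ replaces each such $\hat{a}$ by $s\theta$ (this is the expansion step underlying Lemma~\ref{exunif}). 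Consequently the forced binding of $\shidx{}(z)$ becomes $\exd{\hat{a}}{s\theta}(R)$, a term that contains $\shidx{}(z)$ itself, because $\hat{a}\in\mathit{var}(R)$ and $\shidx{}(z)\in\mathit{var}(s\theta)$. This is an occurs-check violation on $\shidx{}(z)$, so no unifier exists.

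The main obstacle is making this last step rigorous: one must show that the identification $\shidx{}(z)\approx\exd{\hat{a}}{s\theta}(R)$ is genuinely forced by $s_{k+1}\unif t$, and not merely an artifact of the particular construction, so that the self-occurrence really does block unification. I would establish this by adapting the reasoning in the proof of Lemma~\ref{exunif}: away from the positions carrying $\hat{a}$, the terms $\shidx{}(s_k)$ and $s_{k+1}$ have identical structure, so the two unification problems generate the same variable identifications down to those positions, and the variable $\shidx{}(z)$ is therefore bound to the same subterm of $s_{k+1}$ as it was in $\shidx{}(s_k)$, now with $\hat{a}$ expanded to $s\theta$, forcing the cyclic binding. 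The hypothesis $k\geq1$ enters precisely here: it guarantees that the root symbol of $s_k$ is that of $s$, so every occurrence of $\hat{a}$ lies strictly below the root and is absorbed into a variable of the fixed term $t$ rather than exposed at the top level; this is what licenses transporting the bindings of $\theta$ through the $\hat{a}\mapsto s\theta$ expansion in both cases.
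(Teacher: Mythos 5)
Your proposal is correct, and its first case is precisely the paper's entire proof: the paper dispatches Corollary~\ref{missingx} in one line by citing Corollary~\ref{shiftexunif} together with the identity $s_{k+1}=\exd{\hat{a}}{s}(\shidx{}(s_k))$ from Definition~\ref{extensions}, which is exactly your observation that the resulting m.g.u.\ has domain $\{\shidx{}(z)\mid z\in\mathit{Dom}(\sigma_k)\}$ and hence cannot bind $\hat{a}$. Where you genuinely go beyond the paper is your Case~2: the paper's one-liner never addresses what happens when the occurrence side condition of Corollary~\ref{shiftexunif} fails, i.e.\ when the extension $s_{k+1}\unif t$ may not be unifiable at all. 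Your dichotomy closes that hole, using the fact that ``not unifiable'' trivially implies ``not extendably unifiable,'' and arguing non-unifiability via an occurs check on $\shidx{}(z)$. This is not pedantry: the paper's own Example~\ref{exampleProblem} shows that extensions of a unifiable but not extendably unifiable semiloop really can fail to unify (this is why Theorem~\ref{finiteunif} later needs the stronger index condition $|\shidx{}(z)|>m$), so a complete proof of Corollary~\ref{missingx} must cover the non-unifiable outcome exactly as you do. The one soft spot is the one you flag yourself: the claim that the cyclic binding of $\shidx{}(z)$ is \emph{forced} is justified only by structural comparison of $\shidx{}(s_k)$ and $s_{k+1}$ away from the $\hat{a}$-position; that is the same level of rigor as the paper's own proof of Lemma~\ref{exunif}, so it is acceptable in context, but a fully formal version would require an induction on the unification steps (or a unification-closure argument) showing the two problems generate the same equivalence classes outside the $\hat{a}$-position. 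In short: same route as the paper on the main case, plus a completeness case the paper silently omits; what the paper's brevity buys is a clean reduction, what your version buys is an argument that actually covers all situations the paper itself exhibits.
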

\begin{proof}
Follows from Corollary~\ref{shiftexunif} and the definition of $n$-extension of $\langle s,t\rangle$.
\end{proof}

While Corollary~\ref{missingx} tells us that once an extension is not extendably unifiable all larger extension cannot be extendably unifiable, it does not tell us whether or not all larger extension are actually unifiable.

\begin{example}
\label{exampleProblem}
Consider the following example:
$$\langle s,t\vert = \langle h(x_2,h(x_4,\hat{a})), h(y_1,y_1)\vert$$
The unifier of $\langle s,t\vert_1$ is as follows:
$$\{ y_1\mapsto h(x_4,\hat{a}) \ , \  x_2\mapsto h(x_4,\hat{a})\}$$
By Corollary~\ref{shiftexunif} the unifier of $\langle s,t\vert_2$ is as follows:
$$\begin{array}{c} \{ y_1\mapsto h(x_5,h(x_2,h(x_4,\hat{a}))), \\ 
 \ x_3\mapsto h(x_5, h(x_2,h(x_4,\hat{a})))\}\end{array}$$
Once again, by Corollary~\ref{shiftexunif} the unifier of $\langle s,t\vert_3$ would be:
$$\begin{array}{c} \{ y_1\mapsto h(x_6,h(x_3,h(x_5,h(x_2,h(x_4,\hat{a}))))), \\ \  x_4\mapsto h(x_6, h(x_3,h(x_5,h(x_2,h(x_4,\hat{a})))))\}\end{array}$$
however, there is an occurrence check in the irreducible form. The problem is that $x_2$ is smaller than the largest variable occurring in the extendable term of the loop and shifting may cause an occurrence check.
\end{example}
We need to add an extra constraint to Corollary~\ref{shiftexunif} in order to extend unifiability to larger extensions.  
\begin{theorem}
\label{finiteunif}
Let $k>0$. Then if $\langle s_k,t\vert$ is unifiable and the following conditions hold:
\begin{itemize}
\item $\hat{a}\not \in \mathit{Dom}(\sigma_k)$
\item for all $z\in  \mathit{Dom}(\sigma_k)$ s.t. $\hat{a}\in \mathit{var}(z\sigma_k)$, $|\shidx{}(z)| >m$, where $m = \max_{x\in \mathit{var}(s\theta)} |x|$
and  $\theta = \shidx{}(\sigma_k)$.
\end{itemize}
Then for all $j\geq k$, $\langle s_j,t\vert$ is unifiable and thus the loop unification problem $s\unifl t$ is unifiable. 
\end{theorem}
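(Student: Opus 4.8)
The plan is to prove the statement by induction on $j \geq k$, using Corollary~\ref{shiftexunif} as the engine for the inductive step and showing that the two displayed hypotheses form an \emph{invariant} reproduced at each level. The base case $j=k$ is exactly the assumption. For the step, the first task is to check that the strengthened index condition implies the hypothesis of Corollary~\ref{shiftexunif}: if $|\shidx{}(z)| > m$ with $m = \max_{x\in \mathit{var}(s\theta)}|x|$, then no variable of $s\theta$ can equal $\shidx{}(z)$ (for an $x$-class $z$ the index is too large, and for a $y$-class $z$ the class is wrong, since $\mathit{var}(s\theta)$ contains no $y$-variables). Hence $\shidx{}(z)\notin \mathit{var}(s\theta)$, Corollary~\ref{shiftexunif} applies, $s_{j+1}\unif t$ is unifiable, and by Definition~\ref{extensions} its m.g.u.\ is $\sigma_{j+1} = \{\shidx{}(z)\mapsto \exd{\hat{a}}{s\theta}(\shidx{}(r)) \mid z\sigma_j = r\}$.

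The core of the argument is to verify that $\sigma_{j+1}$ again satisfies the two hypotheses so the induction may continue. Since $\Dom{\sigma_{j+1}} = \{\shidx{}(z) \mid z \in \Dom{\sigma_j}\}$ consists of shifted (non-recursion) variables and $\hat{a}$ is fixed by $\shidx{}(\cdot)$, the condition $\hat{a}\notin \Dom{\sigma_j}$ yields $\hat{a}\notin \Dom{\sigma_{j+1}}$ immediately. Next I would identify which bindings of $\sigma_{j+1}$ contain $\hat{a}$: because $s$ is extendable and $\hat{a}\notin \Dom{\sigma_j}$, the term $s\theta$ always contains $\hat{a}$, so $\hat{a}\in \mathit{var}(\exd{\hat{a}}{s\theta}(\shidx{}(r)))$ holds precisely when $\hat{a}\in \mathit{var}(r)$. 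Thus the $\hat{a}$-carrying bindings of $\sigma_{j+1}$ are exactly the shifts of the $\hat{a}$-carrying bindings of $\sigma_j$, which sets up a clean correspondence between the two levels.

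It remains to re-establish the index inequality one shift higher, and this is where the main difficulty lies. I would show that $m$ grows by at most one per extension, i.e.\ $m_{j+1} \leq m_j + 1$ where $m_\ell = \max_{x\in \mathit{var}(s\theta_\ell)}|x|$ and $\theta_\ell = \shidx{}(\sigma_\ell)$; this is essentially because each extension applies a single shift. Granting this, for a pivot variable $z$ (one with $\hat{a}\in \mathit{var}(z\sigma_j)$) the induction hypothesis $|\shidx{}(z)| > m_j$ gives $|\shidx{}(\shidx{}(z))| = |\shidx{}(z)| + 1 > m_j + 1 \geq m_{j+1}$, which is exactly the condition for $\sigma_{j+1}$. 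The obstacle is controlling $m_{j+1}$: forming $s_{j+1} = \exd{\hat{a}}{s}(\shidx{}(s_j))$ re-injects the original term $s$, whose variables carry small indices, and it is precisely this re-injection that produces the spurious occurrence check in Example~\ref{exampleProblem}. The work is therefore to show that, under the index hypothesis, the small-index variables re-introduced from $s$ stay strictly below the pivot variables at every level, so that forming $s_{j+1}$ neither raises $m$ by more than one nor creates the occurrence check of Example~\ref{exampleProblem}. Once the invariant is shown to persist, induction gives unifiability of $\langle s_j,t\vert$ for all $j\geq k$; combined with the standing assumption that $\langle s_j,t\vert$ is unifiable for $0\leq j < k$ (the case $j=0$ being trivial, as $\langle s,t\vert_0 = \langle \hat{a},t\vert$), every extension is unifiable and hence $s\unifl t$ is unifiable.
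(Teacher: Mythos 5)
Your proposal is correct and takes essentially the same route as the paper, whose entire proof reads ``Proof by induction on $j$ using Corollary~\ref{shiftexunif}~\&~\ref{missingx}'': your inductive engine is Corollary~\ref{shiftexunif}, and your derivation of $\hat{a}\notin\Dom{\sigma_{j+1}}$ from the explicit form of the unifier is precisely the content of Corollary~\ref{missingx}. If anything, you supply more detail than the paper (the bookkeeping showing the index bound $m$ grows by at most one per level while pivot indices grow by exactly one); your only slip is the parenthetical claim that $\mathit{var}(s\theta)$ contains no $y$-class variables (it can, since ranges of bindings of $\sigma_k$ may introduce them), but this is harmless because $m$ bounds the index of every variable of $s\theta$ regardless of class.
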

\begin{proof}
Proof by induction on $j$ using Corollary~\ref{shiftexunif}~\&~\ref{missingx}.
\end{proof}

Theorem~\ref{finiteunif} tells us that if for a semiloop $\langle s,t\vert$ we find an extension which is not extendably unifiable we only need to test finitely many extensions to decide if $\langle s,t\vert$ is finitely unifiable. By ensuring $z$ is larger than the largest variable introduced through extending the semiloop we ensure that $z$ cannot be in occurrence check with an introduced variable. Most of the work which follows concerns showing that a similar condition exists for infinite unifiability.
 \section{A Sufficient Condition for Infinite Unifiability}
In this section we provide a useful decomposition for extendably unifiable extensions of semiloops. As in the previous section, we will only consider semiloops constructed using two distinct variables classes. The motivation for this restriction is similar to motivation provided at the beginning of the previous section.  For the remainder of this section we will consider $\langle s,t\vert$ to be a $(\Sigma,\{x,y\} ,\hat{a})$-semiloop. The semiloop of $\langle s,t\vert_k$ will be denoted by $\langle s_k,t\vert$, and, if unifiable, its m.g.u. will be denoted by $\sigma_k$. We assume variables are not renamed in the process of constructing $\sigma_k$. Furthermore, we assume that $\mathit{var}(s) \subset \VIdx{x}\cup \{\hat{a}\}$ and $\mathit{var}(t) \subset \VIdx{y}$. Under these conditions we can construct the loop unification problem $s\unifl t$. 

\subsection{Decomposition of Extendably Unifiable Semiloops}
The following lemma motivates the decomposition introduced in Definition~\ref{decompOp}.
\begin{lemma}
\label{minidecompLemma}
Let $k\geq 1$. If  $\langle s_1,t\vert$, $\langle s_k,t\vert$, and $\langle s_{k+1},t\vert$ are extendably unifiable, then $\sigma_{k+1} = \mathit{sh}^k(\theta)\sigma$ where: 

\begin{itemize}
\item $\mathit{sh}^k(\cdot) = \overbrace{\shidx{}(\cdots \shidx{}(\cdot)\cdots)}^k$
\item $\sigma_1 = \theta\{ \hat{a}\mapsto t'\}$ where  $ \hat{a}\not \in \mathit{Dom}(\theta)$
\item $\sigma$ is the m.g.u of $s_{k}\mu\unif \mathit{sh}^k(t')$ where $\mu = \mathit{sh}^k(\theta)$.  
\end{itemize} 
\end{lemma}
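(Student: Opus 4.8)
The plan is to give a closed syntactic description of $s_{k+1}$ and then read off $\sigma_{k+1}$ as a staged unifier. First I would prove, by induction on $k$, the structural identity
$$ s_{k+1} \;=\; \exd{\hat a}{s_k}\bigl(\mathit{sh}^k(s)\bigr). $$
The inductive step unfolds the definition $s_{k+1}=\exd{\hat a}{s}(\shidx{}(s_k))$ and uses two operator laws: that $\shidx{}$ commutes with the extend operator, $\shidx{}(\exd{\hat a}{u}(v))=\exd{\hat a}{\shidx{}(u)}(\shidx{}(v))$, and that extend operators compose as $\exd{\hat a}{s}(\exd{\hat a}{w}(v))=\exd{\hat a}{\exd{\hat a}{s}(w)}(v)$; the inner extend/shift then collapses to $s_k$ by the induction hypothesis. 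Since $\shidx{}$ fixes $\hat a$, this says that $s_{k+1}$ is exactly $\mathit{sh}^k(s)$ with every occurrence of $\hat a$ replaced by $s_k$. For readability I would assume $\hat a$ occurs once in $s$, at a position $p$, so that $s_{k+1}=\substt{\mathit{sh}^k(s)}{s_k}{p}$; the general case only duplicates the bookkeeping below at each occurrence.

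Next I would record how the outer frame aligns with $t$. As $s_1=s$ and $\langle s_1,t\vert$ is extendably unifiable, $\hat a\in\mathit{Dom}(\sigma_1)$, so the split $\sigma_1=\theta\{\hat a\mapsto t'\}$ with $\hat a\notin\mathit{Dom}(\theta)$ is well defined, and the recursion-variable precedence convention forces $t'=\hat a\sigma_1\notin\Var$. By Lemma~\ref{shiftunif}, $\mathit{sh}^k(\sigma_1)=\mu\{\hat a\mapsto\mathit{sh}^k(t')\}$, with $\mu=\mathit{sh}^k(\theta)$, is the m.g.u.\ of $\mathit{sh}^k(s)\unif t$. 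Because $t$ contains only $\VIdx{y}$-variables and $\hat a\notin\mathit{var}(\mathit{Ran}(\mu))$, we get $\hat a\notin\mathit{var}(t\mu)$; evaluating the m.g.u.\ equation then yields $t\mu=\substt{(\mathit{sh}^k(s)\mu)}{\mathit{sh}^k(t')}{p}$, i.e.\ $\att{t\mu}{p}=\mathit{sh}^k(t')$. Thus the residual problem sitting at the hole is precisely $s_k\mu\unif\mathit{sh}^k(t')$, whose m.g.u.\ is $\sigma$ (its unifiability following from that of $s_{k+1}\unif t$, established in the last step below).

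With the identity and the alignment in hand, the unifier direction is immediate: applying $\mu\sigma$ to both $s_{k+1}=\substt{\mathit{sh}^k(s)}{s_k}{p}$ and $t$, the frames above $p$ agree by the alignment equation, while at $p$ we have $s_k\mu\sigma=\mathit{sh}^k(t')\sigma$ by the definition of $\sigma$; hence $\mu\sigma$ unifies $s_{k+1}\unif t$.

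The hard part is showing $\mu\sigma$ is most general, which is where the double role of $\hat a$ --- as the outer hole marker in $\mathit{sh}^k(s)$ and, simultaneously, as the innermost leaf buried inside $s_k$ --- must be reconciled. Given an arbitrary unifier $\rho$ of $s_{k+1}\unif t$, I would first pass to $\rho'$, obtained from $\rho$ by rebinding $\hat a\mapsto s_k\rho$; since the structure strictly above $p$ carries only $\VIdx{x}$-variables and $\hat a\notin\mathit{var}(t)$, a short check shows $\rho'$ unifies $\mathit{sh}^k(s)\unif t$, so $\rho'=\mathit{sh}^k(\sigma_1)\delta$ for some $\delta$ and, reading off position $p$, $s_k\rho=\mathit{sh}^k(t')\delta$. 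One cannot use $\delta$ directly because $\delta$ and $\rho$ need not agree on $\hat a$; the fix is to pass to $\delta'$, obtained from $\delta$ by rebinding $\hat a\mapsto\hat a\rho$. Using $\hat a\notin\mathit{var}(\mathit{sh}^k(t'))$ one checks $\delta'$ unifies $s_k\mu\unif\mathit{sh}^k(t')$, so $\delta'=\sigma\lambda$ for some $\lambda$, and a case split over $\hat a$, the $\VIdx{x}$-variables and the $\VIdx{y}$-variables verifies $v\rho=v\mu\sigma\lambda$ for every variable $v$ of the problem. Hence $\mu\sigma$ is more general than $\rho$, so $\mu\sigma$ is an m.g.u.\ of $s_{k+1}\unif t$, and it coincides with $\sigma_{k+1}$ by uniqueness of the m.g.u.\ under the no-renaming and recursion-variable precedence conventions. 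The delicate points are exactly these two rebindings of $\hat a$ together with the idempotency facts $\hat a\notin\mathit{var}(\mathit{Ran}(\sigma_1))$ and $\mathit{var}(t')\cap\mathit{Dom}(\sigma_1)=\emptyset$, which license dropping the binding $\{\hat a\mapsto\mathit{sh}^k(t')\}$ at the right moments.
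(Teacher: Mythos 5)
Your proof is correct and takes essentially the same route as the paper's: the same key structural identity $s_{k+1}=\exd{\hat a}{s_k}(\mathit{sh}^k(s))$ (which the paper obtains by rewriting the nested extend/shift expression rather than by an explicit induction), the same appeal to Lemma~\ref{shiftunif} to transport $\sigma_1$ to the shifted problem $\mathit{sh}^k(s)\unif t$, and the same residual problem $s_k\mu\unif\mathit{sh}^k(t')$ whose m.g.u.\ $\sigma$ is composed with $\mu$. The only difference is one of rigor: where the paper merely asserts that ``concatenating the two unifiers'' yields the m.g.u.\ of $s_{k+1}\unif t$, you prove most-generality explicitly via the two rebindings of $\hat a$, which is a welcome filling-in of detail rather than a different approach.
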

\begin{proof}
By definition,  $\langle s_{k+1},t\vert$ is the following semiloop: 

$$\langle \overbrace{\exd{\hat{a}}{s}(\shidx{}(\cdots \exd{\hat{a}}{s}(\shidx{}(s))\cdots))}^k,t\vert$$
and $\langle s_{k},t\vert$ is the following semiloop:
$$\langle \overbrace{\exd{\hat{a}}{s}(\shidx{}(\cdots \exd{\hat{a}}{s}(\shidx{}(s))\cdots))}^{k-1},t\vert$$
Because $s_k$ is an extendable term, we have the operator $\exd{\hat{a}}{s_k}(\cdot)$ and may rewrite  $\langle s_{k+1},t\vert$ as follows: 

$$\langle \exd{\hat{a}}{s_k}(\overbrace{\shidx{}(\cdots \shidx{}(s))\cdots)}^k),t\vert$$

or as $\langle \exd{\hat{a}}{s_k}(s'),t\rangle$ where by $s'=\mathit{sh}^k(s)$. Note that the only difference between $s'\unif t$ and $s\unif t$ is the shift of the variables names (see Lemma~\ref{shiftunif}). Thus, if $\theta\{ \hat{a}\mapsto t'\}$ is a unifier of $s\unif t$ then $\mathit{sh}^k(\theta)\{\hat{a}\mapsto \mathit{sh}^k(t')\}$  is a unifier of $s'\unif t$. Furthermore, $\exd{\hat{a}}{s_k}(\hat{a})\mathit{sh}^k(\theta)\unif \mathit{sh}^k(t')$ is a unification problem which results from replacing $\hat{a}$ in $s'\unif t$  by $s_k$. This problem must have a unifier $\sigma$ by our assumptions and is equivalent to $s_k\mathit{sh}^k(\theta)\unif \mathit{sh}^k(t')$ . Now concatenating the two unifiers we get  $\mathit{sh}^k(\theta)\sigma$  as the m.g.u of  $\langle s_{k+1},t\vert$.
\end{proof}

Using Lemma~\ref{minidecompLemma} we can construct a decomposition of $\sigma_k$  using all $\sigma_j$ for $1	\leq j< k$. The idea behind this decomposition is that the unification problem $s_k\unif t$ associated with the semiloop $\langle s,t\vert$ can be decomposed into two unification problems, namely $\mathit{sh}^k(s)\unif t$ and $s\theta\unif t'$ where $\theta$ and $t'$ are derived from the solved form of $\mathit{sh}^k(s)\unif t$. If we iteratively perform this decomposition we can construct the operator introduced in the following Definition.

\begin{definition}
\label{decompOp}
Let $k\geq 1$. If for all  $0\leq j\leq k$, $\langle s_j,t\vert $ is extendably unifiable, then {\em the decomposition operator} $\mathit{D}(\sigma,s,t,k)$  is defined as follows:

$$\mathit{D}(\sigma,s,t,n+1) \equiv \mathit{sh}^n(\theta) \mathit{D}(\shidx{V}(\sigma\theta),s,\shidx{V}(t'),n) $$
$$\mathit{D}(\sigma,s,t,0) \equiv \{ \hat{a}\mapsto t\} $$
where  $\theta\{ \hat{a}\mapsto t'\}$ is the m.g.u. of $s\sigma\unif t$,$\hat{a}\not \in \mathit{Dom}(\theta)$, and $V = \VIdx{x}$.
\end{definition}

\begin{lemma}
\label{decompLemma}
Let $k\geq 1$. If for all  $0\leq j\leq k$, $\langle s_j,t\vert $ is extendably unifiable, then $\sigma_k = \mathit{D}(\mathit{Id},s,t,k)$.
\end{lemma}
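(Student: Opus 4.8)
The plan is to prove the identity by induction, reading Definition~\ref{decompOp} as an iteration of the single-block decomposition established in Lemma~\ref{minidecompLemma}. Since one step of that decomposition turns the pure extension problem $s_k\unif t$ into a residual problem of the shape $s_{k-1}\mu\unif t''$ with a nontrivial substitution $\mu$ and a shifted fixed term $t''$, an induction hypothesis phrased only about the problems $\langle s_j,t\vert$ is too weak to close the recursion. I would therefore strengthen the statement to an arbitrary (shifted) pair: for every $n\ge 0$ and every admissible $\sigma$ (with $\hat a\notin\mathit{Dom}(\sigma)$) and fixed-like term $\tau$, the m.g.u.\ of $s_n\sigma\unif\tau$ equals $\mathit{D}(\sigma,s,\tau,n)$, with Lemma~\ref{decompLemma} recovered as the instance $\sigma=\mathit{Id}$, $\tau=t$, $n=k$. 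The hypothesis that every $\langle s_j,t\vert$ with $0\le j\le k$ is extendably unifiable is exactly what I would use to guarantee, at each recursion level, that $\hat a\in\mathit{Dom}$ of the current m.g.u., so that the binding $\{\hat a\mapsto t'\}$ to be peeled genuinely exists and the hypotheses of Lemma~\ref{minidecompLemma} are available all the way down.

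For the base case $n=0$, Definition~\ref{extensions} gives $s_0=\hat a$, hence $s_0\sigma=\hat a$, and unifying $\hat a\unif\tau$ produces the single binding $\{\hat a\mapsto\tau\}$, which is precisely $\mathit{D}(\sigma,s,\tau,0)$; the precedence restriction on recursion variables (that $\hat a$ binds to $\tau$ only when $\tau\notin\Var$) is what makes this the m.g.u.\ rather than an orientation binding $\tau$. For the inductive step I would peel the outermost block exactly as in Lemma~\ref{minidecompLemma}: writing $s_{n+1}$ as the block $\mathit{sh}^n(s)$ with $s_n$ substituted into its $\hat a$-slot, I solve the block first. By Lemma~\ref{shiftunif} the m.g.u.\ of the block problem differs from the m.g.u.\ $\theta\{\hat a\mapsto t'\}$ of $s\sigma\unif\tau$ only by the renaming $\mathit{sh}^n$, contributing the leading factor $\mathit{sh}^n(\theta)$ of $\mathit{D}(\sigma,s,\tau,n+1)$; substituting this partial unifier and then replacing $\hat a$ by $s_n$ leaves a residual problem on $s_n$ whose substitution and fixed term are the shifted data $\shidx{V}(\sigma\theta)$ and $\shidx{V}(t')$. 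Applying the induction hypothesis to this residual problem identifies its m.g.u.\ with $\mathit{D}(\shidx{V}(\sigma\theta),s,\shidx{V}(t'),n)$, and composing with the leading factor reproduces the right-hand side of the recursion in Definition~\ref{decompOp}, closing the induction. Corollary~\ref{shiftexunif} is the tool that keeps the extendability (hence the presence of the $\hat a$-binding) intact under the shift at each level.

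I expect the shift bookkeeping to be the main obstacle, and the step where it must be handled most carefully is the reconciliation of the block shift $\mathit{sh}^n$ produced by Lemma~\ref{minidecompLemma} with the single shift $\shidx{V}$ advanced per level in Definition~\ref{decompOp}. The clean way through is to treat $\mathit{sh}$ as a bijective renaming on $\VIdx{x}$ and to verify that it commutes both with unification (Lemma~\ref{shiftunif}, Corollary~\ref{shiftexunif}) and with substitution composition, so that shifting $s$ upward is interchangeable with pushing the accumulated shift into $\sigma\theta$ and $t'$; this is what lets the per-level single shifts telescope against the block shifts. Alongside this I would have to check the structural side conditions that keep the decomposition well defined: that $\hat a\notin\mathit{Dom}(\theta)$ at every level so the split $\theta\{\hat a\mapsto t'\}$ is unambiguous, that the variable classes $\VIdx{x}$ and $\VIdx{y}$ of the two sides remain separated after the accumulated substitutions, and that the extend operator $\exd{\hat a}{\cdot}$ invoked in the peeling is applied only to terms still containing the single occurrence of $\hat a$, as guaranteed by extendable unifiability.
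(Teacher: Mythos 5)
Your high-level strategy is the same as the paper's (the paper's proof is exactly the sketch ``induction on $k$ plus Lemma~\ref{minidecompLemma}, shifting only $\VIdx{x}$''), and your instinct that the naive induction hypothesis is too weak and must be strengthened to arbitrary arguments $(\sigma,\tau)$ is the right way to make that sketch rigorous. However, the invariant you chose --- ``the m.g.u.\ of $s_n\sigma\unif\tau$ equals $\mathit{D}(\sigma,s,\tau,n)$'' --- is not the invariant that Definition~\ref{decompOp} actually satisfies, and with it your inductive step does not close. The issue is the calibration of shifts. When you peel the outermost block of $s_{n+1}\sigma\unif\tau$, that block is $\mathit{sh}^n(s)$, so the residual problem on $s_n$ carries the data $\sigma\,\mathit{sh}^n(\theta)$ and $\mathit{sh}^n(t')$, i.e.\ data shifted $n$ times; but Definition~\ref{decompOp} passes down only the singly shifted data $\shidx{V}(\sigma\theta)$ and $\shidx{V}(t')$, compensating by shifting the leading factor all the way up to $\mathit{sh}^n(\theta)$. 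So your step ``the residual problem\ldots{} whose substitution and fixed term are the shifted data $\shidx{V}(\sigma\theta)$ and $\shidx{V}(t')$'' is false for $n\geq 2$: the induction-hypothesis instance you invoke speaks about a genuinely different unification problem than the residual one. Relatedly, your appeal to Lemma~\ref{shiftunif} to equate the m.g.u.\ of $\mathit{sh}^n(s)\sigma\unif\tau$ with $\mathit{sh}^n$ applied to the m.g.u.\ of $s\sigma\unif\tau$ is not available for arbitrary $\sigma$ and $\tau$: that lemma's renaming argument works because the fixed term $t$ contains no $\VIdx{x}$-variables, whereas in your generalized setting $\tau$ and $\mathit{Ran}(\sigma)$ do contain them, so shifting $s$ alone destroys the alignment.

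The repair is to build the shift compensation into the invariant itself: prove by induction that for all $n\geq 1$ and admissible $(\sigma,\tau)$, $\mathit{D}(\sigma,s,\tau,n)$ is the m.g.u.\ of $s_n\,\mathit{sh}^{n-1}(\sigma)\unif \mathit{sh}^{n-1}(\tau)$; the lemma is then the instance $\sigma=\mathit{Id}$, $\tau=t$, where the compensating shifts act trivially because $t$ has no $\VIdx{x}$-variables. With this statement the peeling of Lemma~\ref{minidecompLemma} and the recursion of Definition~\ref{decompOp} line up exactly: the block problem becomes $\mathit{sh}^n$ of $s\sigma\unif\tau$ (now a legitimate renaming, since both sides are shifted), the residual data is $\mathit{sh}^n(\sigma\theta)$ and $\mathit{sh}^n(t')$, and the identity $\mathit{sh}^{n-1}\circ\shidx{V}=\mathit{sh}^{n}$ is what makes the telescoping you gesture at in your last paragraph actually happen. (Checking this calibration against the paper's worked $k=3$ example is instructive; it also exposes that the base case of Definition~\ref{decompOp} receives one shift too many on the final $\hat{a}$-binding, an off-by-one the example silently corrects, which a fully rigorous proof should remark on.) A small secondary point: Corollary~\ref{shiftexunif} cannot be the tool that preserves the $\hat{a}$-binding down the recursion, since its hypotheses require $\hat{a}\not\in\mathit{Dom}(\sigma_k)$ --- the opposite of extendable unifiability; the existence of the $\hat{a}$-binding at every level follows directly from the hypothesis that each $\langle s_j,t\vert$ is extendably unifiable together with the factorization provided by Lemma~\ref{minidecompLemma}.
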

\begin{proof}
This statement is provable by induction on $k$ and application of Lemma~\ref{minidecompLemma}. Observe that we only shift the variables which occur in $s$. This restriction is required so that our decomposition conforms with the definition of extensions of semiloops  (see Definition~\ref{extensions}).
\end{proof}

\begin{example}
Consider the following semiloop: 
$$\langle s, t\vert = \langle h(h(x_6,h(x_1,x_6)),\hat{a})\ , \ h(y_1,h(y_2,y_1))\vert$$
We will use the abbreviation $t(n) = h(x_{n+6},h(x_{n+1},x_{n+6}))$ below. 
$\langle s, t\vert_3$ has the following form:
$$\langle h(t(2),h(t(1),h(t(0),\hat{a})))\ , \ h(y_1,h(y_2,y_1))\rangle$$
 
The solved form of $s_3\unif t$ is as follows:
$$\begin{array}{c}
\{ y_1  \unif  h(x_8,h(x_3,x_8)), \
y_2  \unif  h(x_7,h(x_2,x_7))\\
x_8  \unif  h(x_6,h(x_1,x_6)), \
\hat{a}  \unif  h(x_3,h(x_6,h(x_1,x_6))\}
\end{array}$$
The unifier $\sigma_3$ constructable from this solved form can be written as follows: 
$$
D(\mathit{Id},h(h(x_6,h(x_1,x_6)),\hat{a}),h(y_1,h(y_2,y_1)),3) = $$ $$
\mathit{sh}^2(\theta^2)D(\mathit{sh}^1(\theta^2),s,h(y_2,t(1))),2)=$$ $$
\mathit{sh}^2(\theta^2)\mathit{sh}^1(\theta^1)D(\mathit{sh}^1(\theta^1),s,t(2),1)=$$ $$
\mathit{sh}^2(\theta^2)\mathit{sh}^1(\theta^1)\theta^0D(\mathit{sh}^1(\theta^0),s,h(x_4,t(1)),0)=$$ $$
\mathit{sh}^2(\theta^2)\mathit{sh}^1(\theta^1)\theta^0\{\hat{a} \mapsto h(x_3,h(h(x_6,h(x_1,x_6))\}
$$

which is equivalent to the unifier: 

$$\{y_1  \mapsto  h(x_8,h(x_3,x_8))\ , \ y_2  \mapsto  h(x_7,h(x_2,x_7))\ , \ $$ $$ x_8  \mapsto  h(x_6,h(x_1,x_6))\ , \ \hat{a} \mapsto h(x_3,h(h(x_6,h(x_1,x_6))\}$$
where
$$\theta^2 = \{y_1  \mapsto  h(x_6,h(x_1,x_6))\}$$
$$\theta^1 = \{y_2  \mapsto  h(x_6,h(x_1,x_6))\}$$
$$\theta^0 = \{x_8  \mapsto  h(x_6,h(x_1,x_6))\}$$

Note that, surprisingly this loop is not infinitely unifiable as the 14-extension is not unifiable. This is due to a problem similar to what was pointed out in Example~\ref{exampleProblem}. When formulating our sufficient condition we will need to consider large enough extensions to avoid such occurrence checks.  
\end{example}
Notice that Lemma~\ref{decompLemma}   requires  all previous substitutions $\theta$ to compute the next substitution. However, for large $k$, only a few variables will occur in both $\langle s,t\vert_1$  and $\langle s,t\vert_k$. Thus, we can further restrict $\sigma$ to a set of ``small'' variables. We do this by computing the size of the interval between the smallest and largest variable occurring in $s$. Once we reach an extension larger than this the number of occurrences of the smallest variable in $s$ is maximized. Let $\Delta(s)\in\Nat$ be defined as follows:

$$\Delta(s) = (\max_{z\in\mathit{var}(s)}  |z| - \min_{z\in\mathit{var}(s)} |z|)$$
Notice, if $r = \min_{z\in\mathit{var}(s)} |z|$, then $r+\Delta(s) = \max_{z\in\mathit{var}(s)}  |z|$, and thus $\forall j\geq 1 (x_{r+\Delta(s)+j}\not \in \mathit{var}(s))$.

\begin{definition}
\label{ExtendeddecompOp}
Let $k\geq 1$.  If for all  $0\leq j\leq k$, $\langle s_j,t\vert $ is extendably unifiable, then the {\em extended decomposition operator} $\mathit{D}'(\sigma,s,t,k)$ is defined as follows:

$$\mathit{D}'(\sigma,s,t,n+1) \equiv \mathit{sh}^k(\theta) \mathit{D}'(\sigma^\Delta ,s,\shidx{V}(t'),n) $$
$$\mathit{D}'(\sigma,s,t,0) \equiv \{ \hat{a}\mapsto t\} $$
where  $\theta\{ \hat{a}\mapsto t'\}$ is the m.g.u. of $s\sigma\unif t$, $\hat{a}\not \in \mathit{Dom}(\theta)$, $m =\min_{z\in\mathit{var}(s)} |z|$, $V=\VIdx{x}$, and $\sigma^\Delta$ is the substitution which coincides with $\shidx{V}(\sigma\theta)$ over $\{x_i\vert i\in [m,m+\Delta(s)]\}$ and is otherwise identical to the identity substitution.
\end{definition}

\begin{lemma}
\label{decompLemma2}
Let $k\geq 1$.  If for all  $0\leq j\leq k$, $\langle s_j,t\vert $ is extendably unifiable, then $\sigma_k = \mathit{D}'(\mathit{Id},s,t,k)$.
\end{lemma}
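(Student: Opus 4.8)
The plan is to reduce the statement to an equality between the two decomposition operators and then prove that equality by a simultaneous induction that runs $\mathit{D}$ and $\mathit{D}'$ in lockstep. By Lemma~\ref{decompLemma} we already know $\sigma_k = \mathit{D}(\mathit{Id},s,t,k)$, so it suffices to prove $\mathit{D}'(\mathit{Id},s,t,k) = \mathit{D}(\mathit{Id},s,t,k)$. The two operators are defined by identical recursions \emph{except} for the substitution handed to the recursive call: $\mathit{D}$ passes the full $\shidx{V}(\sigma\theta)$, whereas $\mathit{D}'$ passes $\sigma^\Delta$, the restriction of $\shidx{V}(\sigma\theta)$ to the window $W = \{x_i \mid i \in [m,m+\Delta(s)]\}$. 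The guiding observation is that the substitution argument is never multiplied directly into the output; it is used \emph{only} to form the problem $s\sigma \unif t$, and since $\mathit{var}(s) \subseteq W$ and $s$ has no $y$-variables, the term $s\sigma$ depends on $\sigma$ solely through its action on $\mathit{var}(s)$. Hence if the two recursions produce the same $\theta$ and the same $t'$ at every level, their outputs---products of shifted $\theta$'s together with the common base case $\{\hat{a}\mapsto t\}$---are literally identical.

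Accordingly, I would run a downward induction along the recursive unfolding (on the fourth argument, from $k$ to $0$), maintaining the invariant that at each level the fixed-term arguments agree, $t^{(n)}_D = t^{(n)}_{D'}$, and the substitution arguments agree on $\mathit{var}(s)$, i.e.\ $s\sigma^{(n)}_D = s\sigma^{(n)}_{D'}$. The base case is immediate since both recursions start from $(\mathit{Id},s,t)$. For the inductive step, the invariant makes the two problems $s\sigma^{(n)}\unif t^{(n)}$ syntactically identical; because m.g.u.'s are computed without renaming (as assumed throughout the section), this yields a common $\theta$ and common $t'$, hence equal prefactors $\mathit{sh}^{n-1}(\theta)$ and a common next fixed term $\shidx{V}(t')$, re-establishing the first half of the invariant one level down.

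The substantive step---and the place I expect the real difficulty---is re-establishing the second half of the invariant, namely that $\shidx{V}(\sigma^{(n)}_D\theta)$ and $(\sigma^{(n)}_{D'})^\Delta$ still agree on $\mathit{var}(s)$. On $W$ the operator $(\cdot)^\Delta$ acts as the identity, so the claim reduces to $\shidx{V}(\sigma^{(n)}_D\theta) = \shidx{V}(\sigma^{(n)}_{D'}\theta)$ on $\mathit{var}(s)$, and unwinding the definition of $\shidx{V}$ on substitutions this in turn reduces to checking $x_{i-1}(\sigma^{(n)}_D\theta) = x_{i-1}(\sigma^{(n)}_{D'}\theta)$ for each $x_i \in \mathit{var}(s)$. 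For $i > m$ the index $i-1$ still lies in $W$, so the inductive hypothesis applies directly; the only delicate case is the boundary index $x_{m-1}$, which sits just below the window and where the two substitutions need not agree. To dispose of it I would carry an auxiliary invariant, $\mathit{Dom}(\sigma^{(n)}_D)\cap\VIdx{x} \subseteq \{x_i \mid i \geq m+1\}$: indeed each $\theta$ binds only $x$-variables of index $\geq m$ (its domain lies in $\mathit{var}(s\sigma)\cup\mathit{var}(t^{(n)})$, whose $x$-indices are bounded below by $m$, using the remark that $x_{m+\Delta(s)+j}\notin\mathit{var}(s)$ together with the fact that $\shidx{V}$ only raises indices), and applying $\shidx{V}$ to the composed domain raises this lower bound from $m$ to $m+1$. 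Consequently $x_{m-1}\notin\mathit{Dom}(\sigma^{(n)}_D)$, and since $\sigma^{(n)}_{D'}$ is the identity below $W$ by construction of $(\cdot)^\Delta$, both sides collapse to $x_{m-1}\theta$. This closes the induction.

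Finally I would note the cosmetic point that the prefactor written as $\mathit{sh}^{k}(\theta)$ in Definition~\ref{ExtendeddecompOp} must be read as $\mathit{sh}^{n}(\theta)$ for the two operators to coincide factor-by-factor; with that reading the shifted $\theta$'s match at every level and the base cases agree, giving $\mathit{D}'(\mathit{Id},s,t,k) = \mathit{D}(\mathit{Id},s,t,k) = \sigma_k$.
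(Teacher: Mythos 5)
Your overall route is exactly the paper's: reduce to Lemma~\ref{decompLemma} and then argue that cutting $\shidx{V}(\sigma\theta)$ down to the window $W=\{x_i \mid i\in[m,m+\Delta(s)]\}$ changes nothing, because the substitution argument enters the recursion only through $s\sigma$, and the variables relevant to $s$ and its shifted copies stay inside that range. The paper disposes of this in two sentences; your lockstep induction is a legitimate way to make it precise, and your closing remark that the prefactor $\mathit{sh}^{k}(\theta)$ in Definition~\ref{ExtendeddecompOp} must be read as $\mathit{sh}^{n}(\theta)$ is correct and is indeed needed for the factor-by-factor comparison with Definition~\ref{decompOp}.

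There is, however, one genuine defect in the write-up: your invariant is too weak to be self-propagating. You maintain only that the substitution arguments agree on $\mathit{var}(s)$, i.e.\ $s\sigma^{(n)}_D = s\sigma^{(n)}_{D'}$, yet in the inductive step you invoke this hypothesis at the indices $x_{i-1}$ with $i>m$, justified by ``$i-1$ still lies in $W$''. Membership in $W$ does not give membership in $\mathit{var}(s)$: the paper notes explicitly (the lemma preceding Lemma~\ref{fixedvariables}) that the indices of $\mathit{var}(s)$ need not form an interval, so $W\setminus\mathit{var}(s)$ can be nonempty, and at such an $x_{i-1}$ both $\sigma^{(n)}_D$ and $\sigma^{(n)}_{D'}$ may carry bindings (created by earlier $\theta$'s on window variables that are not variables of $s$) about which your stated hypothesis says nothing. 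The repair is local: strengthen the invariant to agreement on all of $W$, which is what your step actually re-establishes --- the case $i>m$ then genuinely falls under the induction hypothesis, and the boundary case $x_{m-1}$ is handled by your auxiliary domain invariant exactly as you wrote it. With that single change your argument closes and amounts to a fully detailed version of the paper's sketch.
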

\begin{proof}
Follows from Lemma~\ref{decompLemma} by restricting the domain of $\shidx{V}(\sigma\theta)$ to variables occurring in the range $[m,m+\Delta(s)]$ where $m =\min_{z\in\mathit{var}(s)} |z|$. Variables which are outside this range (indexed by a natural number larger than $m+\Delta(s)$)  do not occur in $s$ or any extension of $s$ as the substitutions are shifted during decomposition as well.
\end{proof}
\subsection{The Sufficient Condition}
This decomposition is almost enough for us to provide a sufficient condition for a loop to be infinitely unifiable. However, as mentioned earlier, we need one additional condition based on $\Delta(s)$. If the variable indices form an interval, then the condition mentioned below is not necessary. 
\begin{lemma}
There exists a semiloop $\langle s,t\vert$ and $1 \leq k< \Delta(s)$ such that the  $x_j\in\mathit{var}(s_k)$ and  $x_j\not \in \mathit{var}(s)$, for $r< j\leq r+\Delta(s)$. 
\end{lemma}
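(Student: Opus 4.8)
The plan is to treat this purely as an existence claim and exhibit an explicit witness rather than argue abstractly. The sentence immediately preceding the lemma flags the relevant regime: the extra $\Delta(s)$-based condition is only needed when the $x$-indices occurring in $s$ do \emph{not} form a contiguous interval. So the natural strategy is to pick a semiloop whose index set has an interior gap near its minimum, and to read off a suitable $k$ and $j$ from a single small extension. I would reuse the paper's running example $\langle s,t\vert = \langle h(h(x_6,h(x_1,x_6)),\hat{a})\ ,\ h(y_1,h(y_2,y_1))\vert$, for which $\mathit{var}(s)\cap\VIdx{x} = \{x_1,x_6\}$, so that $r = \min_{z\in\mathit{var}(s)}|z| = 1$, $\max_{z\in\mathit{var}(s)}|z| = 6$, and hence $\Delta(s) = 6-1 = 5$. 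Note the quantifier on $j$: since $x_6\in\mathit{var}(s)$, the clause ``$x_j\notin\mathit{var}(s)$'' cannot hold for every $j\in(r,r+\Delta(s)]$, so I read the lemma as asserting the existence of a single index $j$ in the admissible range $r < j \leq r+\Delta(s)$, which here is $j\in\{2,3,4,5,6\}$.

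Next I would compute $s_2$ directly from Definition~\ref{extensions}. Since $\shidx{}$ advances the $x$-indices by one, $\shidx{}(s) = h(h(x_7,h(x_2,x_7)),\hat{a})$, and substituting $s$ for $\hat{a}$ yields $s_2 = \exd{\hat{a}}{s}(\shidx{}(s)) = h(t(1),h(t(0),\hat{a}))$ with $t(n) = h(x_{n+6},h(x_{n+1},x_{n+6}))$ (this already appears as $\langle s,t\vert_2$ in the example). Reading off the $x$-indices gives $\mathit{var}(s_2)\cap\VIdx{x} = \{x_1,x_2,x_6,x_7\}$. The witness is then $k=2$ and $j=2$: one checks $1 \leq k = 2 < \Delta(s) = 5$, that $j=2$ lies in $(r,r+\Delta(s)] = (1,6]$, and that $x_2\in\mathit{var}(s_2)$ while $x_2\notin\mathit{var}(s)$. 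This settles the claim.

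To show the example is not accidental, I would record the underlying principle: in $s_k$ the set of occurring $x$-indices is $\bigcup_{c=0}^{k-1}(I+c)$, where $I$ is the index set of $s$ and $I+c = \{i+c : i\in I\}$, because each extension freezes a copy of $s$ shifted by one more level. The minimal index $r\in I$ therefore contributes $r+c$ for every $c\leq k-1$; in particular the level-$1$ copy populates index $r+1$. Choosing a semiloop whose index set has a gap immediately above its minimum (here $2\notin\{1,6\}$) and whose span satisfies $\Delta(s) > 2$ then guarantees simultaneously that $x_{r+1}\in\mathit{var}(s_2)\setminus\mathit{var}(s)$ and that $k=2$ stays strictly below $\Delta(s)$.

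The only delicate point — what passes for the ``hard part'' in an otherwise routine construction — is the bookkeeping forced by the strict inequality $k < \Delta(s)$. One must place the gap close enough to $r$ that a small $k$ already reaches it (the estimate $j \leq r+k-1$ is the relevant constraint), while simultaneously making $\Delta(s)$ large enough that $k < \Delta(s)$ survives; a contiguous index set admits no interior gap at all and would fail, which is exactly the escape clause the preceding sentence of the paper records. Verifying these inequalities for the chosen witness is immediate, so no genuine obstacle remains.
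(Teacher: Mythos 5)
Your proof is correct and takes the same approach as the paper: the paper's entire proof is ``Consider $\langle h(h(x_6,h(x_1,x_6)),\hat{a}),\ h(y_1,h(y_2,y_1))\vert$,'' i.e.\ exactly the witness semiloop you chose. Your computation of $s_2$, the reading of the quantifier on $j$, and the choice $k=j=2$ merely supply the verification details the paper leaves implicit.
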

\begin{proof}
Consider $\langle h(h(x_6,h(x_1,x_6)),\hat{a}), h(y_1,h(y_2,y_1))\vert.$
\end{proof}
While this may not seem to be overtly problematic, it implies that the value of $\sigma^\Delta$ and $t'$ may change at the r$^{th}$ step of the decomposition depending on which extension we are considering, i.e. some variables in the interval  $[m,m+\Delta(s)]$ may only occur in higher extension. To avoid this issue we need to consider extensions which are large enough.

\begin{lemma}
\label{fixedvariables}
Let $k,j\geq 2\cdot\Delta(s)$ such that $j\geq k$. Then for every $l\in   [m,m+\Delta(s)]$ where $m =\min_{x_i\in\mathit{var}(s)} |x_i|$, the set of positions at which $x_l$ occurs in  $\langle s,t\vert_k$  is the same as the set of positions at which $x_l$ occurs in $\langle s,t\vert_j$. 
\end{lemma}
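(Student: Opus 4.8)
The plan is to make the block structure of $s_k$ completely explicit and then read off the occurrences of each $x_l$ directly. First I would unfold the recursive definition of the $k$-extension (Definition~\ref{extensions}) into a closed form. Writing $C[\,\cdot\,]$ for the single-hole context obtained from $s$ by replacing its occurrence of $\hat a$ with the hole, a straightforward induction on $k$ — using that $\shidx{}(\cdot)$ commutes with term formation and fixes $\hat a$ — yields
\[
s_k = \mathit{sh}^{k-1}(C)[\,\mathit{sh}^{k-2}(C)[\cdots C[\hat a]\cdots]\,].
\]
Thus $s_k$ is a nest of $k$ copies of $C$, the $i$-th copy counted from the recursive end $\hat a$ being $\mathit{sh}^{i}(C)$, i.e. $C$ with every variable index raised by $i$. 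I will call this the block at depth $i$.

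Next I would characterise the occurrences of $x_l$. Since $\shidx{}(\cdot)$ merely raises indices, $x_l$ occurs in the depth-$i$ block exactly where $x_{l-i}$ occurs in $C$ (equivalently in $s$), which requires $x_{l-i}\in\mathit{var}(s)$, i.e. $l-i$ is a base index of $s$, and all base indices lie in $[m,m+\Delta(s)]$. For $l\in[m,m+\Delta(s)]$ this forces $0\le i\le\Delta(s)$: a block of depth $i>\Delta(s)$ carries only variables of index at least $m+i\ge m+\Delta(s)+1>l$, so it cannot contain $x_l$. Hence every occurrence of $x_l$ lies inside the blocks of depth $0,\dots,\Delta(s)$, a bounded neighbourhood of the recursive leaf, and the set of depths actually carrying $x_l$ (namely those $i$ with $l-i$ a base index) is itself independent of $k$.

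The conclusion then follows by noting that this neighbourhood is stable once it is fully present. For $k\ge\Delta(s)+1$ — a fortiori for $k\ge 2\Delta(s)$ — every block of depth $0,\dots,\Delta(s)$ has already been created, each block carries the same variable indices and sits in the same configuration relative to $\hat a$, the within-block positions of $x_l$ agree with the positions of $x_{l-i}$ in $s$, and no block of higher depth contributes an occurrence of any $x_l$ with $l\le m+\Delta(s)$. Since an extension step only prepends further blocks above depth $\Delta(s)$ and raises indices, it disturbs neither the relevant blocks nor the occurrences of $x_l$ inside them. Therefore for $k,j\ge 2\Delta(s)$ the depth-$0,\dots,\Delta(s)$ neighbourhoods of $s_k$ and $s_j$ coincide, and so does the position set of each $x_l$ with $l\in[m,m+\Delta(s)]$. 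I would invoke the preceding lemma — that some variables of the interval $(m,m+\Delta(s)]$ appear only in extensions $s_k$ with $k>1$ — precisely to justify that one cannot read the occurrences off $s=s_1$ alone, which is why the block-by-block analysis and a uniform lower bound on $k$ are needed.

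I expect the main obstacle to be the bookkeeping of \emph{which} positions are being compared. Read from the root, the absolute position strings of these blocks do move as $k$ grows, because the spine above $\hat a$ lengthens with each extension; the statement must therefore be read with positions anchored at the recursive occurrence of $\hat a$ (equivalently, as the configuration of the fixed-size neighbourhood of $\hat a$). Establishing rigorously that this relative position of every depth-$\le\Delta(s)$ block is genuinely $k$-independent, and that no stray occurrence of a low-index $x_l$ hides in a deeper block, is the delicate part. The factor $2$ in $2\Delta(s)$ is a comfortable safety margin guaranteeing the whole window of relevant blocks is both created and interior; the argument above already goes through for $k\ge\Delta(s)+1$.
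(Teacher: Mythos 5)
Your proposal is correct and is essentially the paper's own argument: the paper proves the lemma by induction on $j-k$ from the observation that $s_k$ is a subterm of $s_j$ and that every variable in $\mathit{var}(s_j)\setminus\mathit{var}(s_k)$ is produced by at least $2\cdot\Delta(s)+1$ applications of $\shidx{}$, hence cannot be any $x_l$ with $l\in[m,m+\Delta(s)]$ --- which is exactly your block-by-block analysis recast in inductive form. Your explicit caveat that positions must be read anchored at the occurrence of $\hat{a}$ (root-anchored positions do move as the spine lengthens), and your remark that $k\geq\Delta(s)+1$ already suffices, make precise two points that the paper's brief proof leaves implicit.
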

\begin{proof}
Let $j = k+r$. We can prove this statement by induction on $r$ and the observation that $s_k$ is a subterm of $s_j$. By definition, none of the variables in $\mathit{var}(s_j)\setminus \mathit{var}(s_k)$ are in the set $\{x_l\ \vert \ l\in   [m,m+\Delta(s)]\}$. All the variables in $\mathit{var}(s_j)\setminus \mathit{var}(s_k)$ are produced by applying $\shidx{}$ at least $2\cdot\Delta(s)+1$ times and thus cannot occur in $\{x_l\ \vert \ l\in   [m,m+\Delta(s)]\}$.
\end{proof}
Lemma~\ref{fixedvariables} can be strengthened by observing that the number of variables whose occurrences are fixed relative to $\langle s,t\vert_k$ in $\langle s,t\vert_j$ is a function of $k$. As $k$ increases the number of variables whose maximum number of occurrences is fixed increases. Though, this observation is not essential for proving our sufficient condition. 

What is needed to prove our sufficient condition is a secondary decomposition which allows one to take the unifier of a extension which is extendably unifiable and decompose it into ``sub''-substitutions. These substitutions, as we will show, can be composed to construct a unifier for other extensions. Under certain conditions, the substitutions can construct a unifier for any extension. However, Unlike our previous sufficient condition, all extensions of the considered semiloops are extendably unifiable. 
\begin{definition}
\label{Segmentation}
Let $r\geq 0$, $0\leq j\leq i\leq r$, and  $\sigma_{r+1}$ decomposes as follows:

$$\mathit{D}'(\mathit{Id},s,t,r+1) = \Theta(r+1)\mathit{D}'(\sigma^\Delta_1,s,t_1,r)$$
$$\vdots$$
$$\mathit{D}'(\sigma_{r-i+1},s,t_{r-i+1},i) = \Theta(i)\mathit{D}'(\sigma^\Delta_{r-i},s,t_{r-i},i-1) $$
$$\vdots$$
$$\mathit{D}'(\sigma_{r-j+1},s,t_{r-j+1},j) = \Theta(j)\mathit{D}'(\sigma^\Delta_{r-j},s,t_{r-j},j-1) $$
$$\vdots$$
$$\mathit{D}'(\sigma_{r+1},s,t_{r+1},0) =  \{ \hat{a}\mapsto t_{r+1}\}$$ 

where $\Theta(n) = \mathit{sh}^{n-1}(\theta_{n-r})$. then the {\em $(i,j)$-segment of $\sigma_{r+1}$} is 
$$ \mathit{Seg}(\sigma_{r+1},i,j) = \Theta(i)\mathit{Seg}(\sigma_{r+1},i-1,j)$$
$$ \mathit{Seg}(\sigma_{r+1},j,j) = \Theta(j)$$
Furthermore, the {\em $(i,j)$-substitution of $\sigma_{r+1}$} is 
$$ \mathit{Sub}(\sigma_{r+1},i,j) = \Theta(i)\mathit{Sub}(\sigma_{r+1},i-1,j)$$
$$ \mathit{Sub}(\sigma_{r+1},j,j) = \Theta(i)\{ \hat{a}\mapsto t_{r-j}\}$$
\end{definition}

Observe that Definition~\ref{Segmentation} uses the decomposition of Lemma~\ref{decompLemma2} to construct the segments and substitutions.  
\begin{lemma}
\label{seqshift}
Let $r,k\geq 0$ such that $\sigma_r$ and $\sigma_k$ exists and are decomposible using Lemma~\ref{decompLemma2}, and $0\leq j\leq r\leq k$. Then  $$\mathit{sh}^{k-r}(\mathit{Seg}(\sigma_{r},r,j)) = \mathit{Seg}(\sigma_{k},k,j+(k-r)).$$
\end{lemma}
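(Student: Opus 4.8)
The plan is to reduce the segment identity to a factor-by-factor comparison and then establish that comparison through the internal structure of the decomposition operator. The enabling observation is that $\mathit{sh}$ acts as an injective renaming of variables (shifting indices in $\VIdx{x}$ while fixing every other symbol), and an injective renaming distributes over composition: $\mathit{sh}^{m}(\alpha\beta)=\mathit{sh}^{m}(\alpha)\mathit{sh}^{m}(\beta)$ for all substitutions $\alpha,\beta$ and all $m$. This is immediate by evaluating both sides on a variable $\mathit{sh}^{m}(y)$ and using that $\mathit{sh}^{m}(\beta)$ sends $\mathit{sh}^{m}(z)$ to $\mathit{sh}^{m}(z\beta)$. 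Since, by Definition~\ref{Segmentation}, $\mathit{Seg}(\sigma_{r},r,j)=\Theta(r)\Theta(r-1)\cdots\Theta(j)$ is a composition of single factors, applying $\mathit{sh}^{k-r}$ commutes with the product, so it suffices to match the factors of $\mathit{sh}^{k-r}(\mathit{Seg}(\sigma_{r},r,j))$ with those of $\mathit{Seg}(\sigma_{k},k,j+(k-r))$ one at a time.

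Next I would read $\mathit{D}'(\mathit{Id},s,t,\cdot)$ as a deterministic recursion and index its unfoldings by \emph{depth} $d$ (the number of steps taken from the top), rather than by the level argument. At depth $d$ the recursion carries a pair $(\sigma,t)$, solves $s\sigma\unif t$ to obtain a $\theta$ and a residual $t'$ (with $\hat{a}\notin\mathit{Dom}(\theta)$), emits one factor, and continues on $(\sigma^{\Delta},\shidx{V}(t'))$. The crucial point is that the map from the depth-$d$ data to the depth-$(d{+}1)$ data, and the $\theta$ it produces, depend only on $(\sigma,t)$ and not on the fourth (counter) argument of $\mathit{D}'$; the counter governs only the shift exponent attached to the emitted factor and the stopping point. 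I would then show by induction on $d$ that, because both $\mathit{D}'(\mathit{Id},s,t,r)$ and $\mathit{D}'(\mathit{Id},s,t,k)$ begin with the same data $(\mathit{Id},t)$, the threaded pair, the solved $\theta_{d}$, and the windowed restriction $\sigma^{\Delta}$ all coincide at every common depth $d$. Here it matters that the window $\{x_{i}\mid i\in[m,m+\Delta(s)]\}$ defining $\sigma^{\Delta}$ in Definition~\ref{ExtendeddecompOp} is fixed (it depends only on $s$), hence identical in the two runs, so the restriction step preserves the inductive coincidence.

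With the depth-indexed $\theta_{d}$'s identified, the remainder is shift bookkeeping. For $\sigma_{n}=\mathit{D}'(\mathit{Id},s,t,n)$ the factor emitted at depth $d$ is $\mathit{sh}^{\,n-d-1}(\theta_{d})$, which in the level notation is the factor of index $n-d$; write $\Theta_{r}$ and $\Theta_{k}$ for the factor families of $\sigma_{r}$ and $\sigma_{k}$. Comparing the two runs at a common depth $d$ with the shared $\theta_{d}$ gives $\mathit{sh}^{\,k-d-1}(\theta_{d})=\mathit{sh}^{k-r}\!\left(\mathit{sh}^{\,r-d-1}(\theta_{d})\right)$, i.e. $\mathit{sh}^{k-r}\!\left(\Theta_{r}(i)\right)=\Theta_{k}(i+(k-r))$ for each level $i=r-d$. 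Now $\mathit{Seg}(\sigma_{r},r,j)$ runs over levels $r,\dots,j$, i.e. depths $0,\dots,r-j$, while $\mathit{Seg}(\sigma_{k},k,j+(k-r))$ runs over levels $k,\dots,j+(k-r)$, i.e. the identical depth window $0,\dots,r-j$; the two products therefore have equal length and matching depth-$d$ factors related by $\mathit{sh}^{k-r}$. Combining this with the distributivity of the first paragraph yields the claimed equality. (Equivalently, one may package the whole argument as an induction on $k-r$ whose single step is $\mathit{sh}(\mathit{Seg}(\sigma_{r},r,j))=\mathit{Seg}(\sigma_{r+1},r+1,j+1)$, the base case $k=r$ being trivial.)

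The main obstacle is the second step: making rigorous that the decomposition's internal state is genuinely counter-independent, so that the substitution $\theta_{d}$ produced at depth $d$ is literally the same object in both runs. The delicate part is the passage to $\sigma^{\Delta}$ in Definition~\ref{ExtendeddecompOp}: one must verify that discarding the bindings outside the fixed window never removes information that a deeper level of the longer decomposition would later consult. This is exactly what the range bound $x_{i}\notin\mathit{var}(s)$ for $i>m+\Delta(s)$, together with Lemma~\ref{decompLemma2}, guarantees. Once the counter-independence of the $\theta_{d}$'s is secured, the surrounding shift-exponent arithmetic and the distributivity of $\mathit{sh}$ over composition are routine.
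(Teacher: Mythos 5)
Your proof is correct and takes essentially the same route as the paper, whose own proof is a one-line appeal to the definition of $(i,j)$-segments together with the decomposition of Lemma~\ref{decompLemma2} --- precisely what you unfold: the counter-independence of the threaded data $(\sigma,t)$ in $\mathit{D}'$, the resulting coincidence of the $\theta$'s at equal depth in the two runs, and the shift-exponent bookkeeping combined with distributivity of $\mathit{sh}$ over composition. You are simply supplying the details the paper leaves implicit.
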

\begin{proof}
This follows from the definition of an $(i,j)$-segment and the decomposition introduced in  Lemma~\ref{decompLemma2}.
\end{proof}
The sufficient condition for infinitely loop unifiability essentially states that if we are considering a large enough extension (above $2\cdot\Delta(s)$), then when a cycle is discovered in the decomposition we can use this cycle to build a unifier for any extension. The bound of $2\cdot\Delta(s)$ is essential as the result is dependent on the property of extensions introduced in Lemma~\ref{fixedvariables}.
\begin{theorem}
\label{sufficientCondition}
Let $r>2\cdot\Delta(s)$. If for all $0\leq j\leq r+1$, $\langle s,t\vert_j$ is extendably unifiable and $\sigma_{r+1}$ decomposes  as follows 
$$\mathit{D}'(\mathit{Id},s,t,r+1) = \Theta(r+1) \mathit{D}'(\sigma^\Delta_1,s,t_1,r)$$
$$\vdots$$
$$\mathit{D}'(\sigma_{r-i+1},s,t_{r-i+1},i) = \Theta(i)\mathit{D}'(\sigma^\Delta_{r-i},s,t_{r-i},i-1) $$
$$\vdots$$
$$\mathit{D}'(\sigma_{r-j+1},s,t_{r-j+1},j) = \Theta(j)\mathit{D}'(\sigma^\Delta_{r-j},s,t_{r-j},j-1) $$

$$\vdots$$
$$\mathit{D}'(\sigma_{r+1},s,t_{r+1},0) =  \{ \hat{a}\mapsto t_{r+1}\}, $$

where $\Theta(n) = \mathit{sh}^{n-1}(\theta_{n-r})$ and $j\leq i\leq 2\cdot\Delta(s)$, then if  $t_{r-i} =t_{r-j}$ and $\sigma^\Delta_{r-i} = \sigma^\Delta_{r-j}$ then  $\langle s,t\vert$ is infinitely loop unifiable.
\end{theorem}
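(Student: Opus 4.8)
The plan is to read the extended decomposition of Definition~\ref{ExtendeddecompOp} as a deterministic state-transition system and to exploit the hypothesised coincidence of two states as a cycle that can be pumped indefinitely. First I would observe that a single decomposition step $\mathit{D}'(\sigma,s,t,\ell)\rightsquigarrow\mathit{D}'(\sigma^\Delta,s,\shidx{}(t'),\ell-1)$ is determined entirely by the pair $(\sigma^\Delta,t)$ together with the fixed term $s$: the factor $\theta\{\hat{a}\mapsto t'\}$ is the m.g.u. of $s\sigma\unif t$, and both the successor restricted substitution $\sigma^\Delta$ and the successor term $\shidx{}(t')$ are functions of that m.g.u. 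Hence the decomposition of $\sigma_{r+1}$ produces a sequence of states, and the hypotheses $t_{r-i}=t_{r-j}$ and $\sigma^\Delta_{r-i}=\sigma^\Delta_{r-j}$ assert that two of these states, separated by a gap $c=i-j$, coincide exactly.

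The conceptual heart is that this transition is not merely deterministic but \emph{time-invariant} once the level is large enough. Here Lemma~\ref{fixedvariables} is essential: because the window variables $\{x_l : l\in[m,m+\Delta(s)]\}$ occupy the same positions in every extension above $2\cdot\Delta(s)$, and because only window variables can occur in $s$, the m.g.u. of $s\sigma^\Delta\unif t$ depends only on the restriction $\sigma^\Delta$ and on $t$, not on the absolute level. The interval restriction built into Definition~\ref{ExtendeddecompOp} discards exactly those variables that never re-enter $s$, so the discarded information can never influence a later step. The constraints $r>2\cdot\Delta(s)$ and $j\le i\le 2\cdot\Delta(s)$ guarantee that both coinciding states lie inside this stable region, so the transition acts homogeneously on them.

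Given a deterministic, time-invariant transition possessing two equal states a gap $c$ apart, the state sequence is periodic with period $c$ from that point onward. I would then convert periodicity into unifiers through the segment calculus of Definition~\ref{Segmentation}: the $(i,j)$-segment $\mathit{Seg}(\sigma_{r+1},i,j)$ is precisely one period, so inserting a shifted extra copy of it into the decomposition of $\sigma_{r+1}$ yields the decomposition of $\sigma_{r+1+c}$. Lemma~\ref{seqshift} supplies the shift bookkeeping, identifying the $\shidx{}$-image of a segment at level $r$ with the corresponding segment at a higher level, so that the pumped product is genuinely the unifier at that extension rather than a shifted artefact; capping the product with the closing $\mathit{Sub}$-style binding of $\hat{a}$ gives a complete substitution. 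An induction on the number of inserted periods yields $\sigma_{r+1+mc}$ for every $m$, and splicing in an initial fragment of one period (a prefix of the cycle, each of whose steps replays an already-succeeding m.g.u. computation) covers every intermediate $n$. Since every state in the cycle produces an m.g.u. that binds $\hat{a}$, each $\langle s,t\vert_n$ so obtained is not merely unifiable but extendably unifiable, and as this holds for all $n$, $\langle s,t\vert$ is infinitely loop unifiable via Lemma~\ref{decompLemma2}.

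I expect the main obstacle to be the rigorous justification of time-homogeneity. Equality of the two states is only equality of the interval-restricted substitutions $\sigma^\Delta$, whereas the full substitutions carried along the decomposition genuinely differ; one must prove that this discarded information cannot reach $s$ in any subsequent step, which is exactly where Lemma~\ref{fixedvariables} and the $2\cdot\Delta(s)$ thresholds are consumed. A secondary difficulty is showing that the pumped product never triggers an occurrence check of the kind that invalidates the large extensions in the finite examples; but this is subsumed by periodicity, since confinement to the stable window means every unification encountered beyond level $r+1$ is a shifted replica of one already carried out successfully within the detected cycle.
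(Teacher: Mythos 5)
Your proposal is correct and follows essentially the same route as the paper's proof: detect the cycle of period $i-j$ in the $\mathit{D}'$ decomposition, then pump shifted copies of the cycle segment $\mathit{Seg}(\sigma_r,i,j+1)$ via Lemma~\ref{seqshift}, closing with a $\mathit{Sub}$ term (and a partial period when $k-r+1$ is not a multiple of $i-j$), with Lemma~\ref{fixedvariables} and the $2\cdot\Delta(s)$ threshold justifying that the cycle genuinely repeats. Your state-transition framing is merely a more explicit articulation of the repetition argument that the paper asserts tersely, not a different method.
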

\begin{proof}
We can prove this statement by showing that from the unifier of $\langle s,t\vert_r$ we can construct a unifier for  $\langle s,t\vert_k$ where $k\geq r$. What is important to observe is that the cycle present in the decomposition will repeat modulo $i-j$. We can use this fact to construct a sequence of segments which together construct $\sigma_k$. When $k= r$, the proof is trivial as we already have a unifier. when $k>r$ we need to consider the $(r,i+1)$-segment $\mathit{Seg}(\sigma_r,r,i+1)$. By Lemma~\ref{seqshift}, we can construct the   $\mathit{Seg}(\sigma_{k},k,i+(k-r)+2)$ segment of $\sigma_k$. However, note that this segment is followed by the segment $\mathit{Seg}(\sigma_{k},i+(k-r)+1,j+(k-r)+2)$ which is equivalent to $\mathit{sh}^{k-r}(\mathit{Seg}(\sigma_{r},i,j+1))$ by Lemma~\ref{seqshift}. If $k-r+1$ is divisible by $i-j$, then for some $\alpha$ we get the following:

$$\sigma_k = \mathit{sh}^{\gamma}(\mathit{Seg}(\sigma_r,r,i+1))\mathit{sh}^{\alpha(i-j)}(\mathit{Seg}(\sigma_{r},i,j+1))$$
$$\mathit{sh}^{(\alpha-1)(i-j)}(\mathit{Seg}(\sigma_{r},i,j+1))\cdots \mathit{Sub}(\sigma_{r},i,j)$$
where $\gamma = k-r+1$. Otherwise, $k-r+1 = \alpha(i-j)+ \beta$. In this case we get 
$$\sigma_k = \mathit{sh}^{\gamma}(\mathit{Seg}(\sigma_r,r,i+1))\mathit{sh}^{\alpha(i-j)+\beta}(\mathit{Seg}(\sigma_{r},i,j+1))$$
$$\mathit{sh}^{(\alpha-1)(i-j)+\beta}(\mathit{Seg}(\sigma_{r},i,j+1))\cdots \mathit{sh}^{\beta}(\mathit{Seg}(\sigma_{r},i,j+1))$$ $$\mathit{Sub}(\sigma_{r},i,i-\beta)$$
where $\gamma = k-r+1$. Thus, we can construct a unifier for all extensions of the semiloop $\langle s,t\vert$.
\end{proof}
\begin{example}
Consider the semiloop 
$$\langle h(\hat{a},h(h(x_1,x_1),x_1))\ , \  h(h(h(y_1,y_1),y_1),y_1) \vert,$$
which we abbreviate as $\langle s,t\vert$ and we define $t(n) = h(x_{n+1},x_{n+1}),x_{n+1})$ . Note that $\Delta = 0$. Now consider the $D'$ decomposition of $\langle s,t\vert_5$:
$$D'(\mathit{Id},s, t,5) =\mathit{sh}^4(\theta^4)D'(\mathit{Id},s,h(h(t(1),t(1)),t(1)),4)$$
When decomposing $D'(\mathit{Id},s,h(h(t(1),t(1)),t(1)),4)$ we end up with the first step 
of the cycle as presented in Theorem~\ref{sufficientCondition}. Thus, $i=3$ based on the following decomposition:
$$\mathit{sh}^4(\theta^4)\mathit{sh}^3(\theta^3)D'(\mathbf{Id},s,\mathbf{h(t(1)),t(1)))},\mathbf{3})$$
Decomposing $D'(Id,s,h(t(1)),t(1))),3)$ does not yet result in a cycle:
$$\mathit{sh}^4(\theta^4)\mathit{sh}^3(\theta^3)\mathit{sh}^2(\theta^2)D'(\mathit{Id},s,t(1),2)
$$
Decomposing $D'(\mathit{Id},s,t(1),2)$ results in the second step 
of the cycle as presented in Theorem~\ref{sufficientCondition} and thus, $j=2$. 
$$\mathit{sh}^4(\theta^4)\mathit{sh}^3(\theta^3)\mathit{sh}^2(\theta^2)\mathit{sh}^1(\theta^1)D(\mathbf{Id},s,\mathbf{h(t(1),t(1))},\mathbf{1})$$
At this point we know, by Theorem~\ref{sufficientCondition}, that this semiloop is infinitely loop unifiable. The rest of the decomposition is as follows: 
$$\mathit{sh}^4(\theta^4)\mathit{sh}^3(\theta^3)\mathit{sh}^2(\theta^2)\mathit{sh}^1(\theta^1)\theta^0\{\hat{a} \mapsto  h(h(x_2,x_2),x_2)\}$$
Where the substitutions $\theta^i$ are as follows:
\begin{align*}
\theta^4 =& \{ y_2 \mapsto h(h(x_1,x_1),x_1) \}\\
\theta^3 =& \{ x_2 \mapsto x_1 \}\\
\theta^2=& \{  x_2 \mapsto x_1 \}\\
\theta^1 =& \{ x_2 \mapsto h(h(x_1,x_1),x_1) \}\\
\theta^0=& \{  x_2 \mapsto x_1 \}
\end{align*}
\end{example}

To illustrate why the restriction to extensions greater than $2\cdot\Delta(2)$ is necessary consider the following example where a cycle occurs, but not every extension is unifiable. 

\begin{example}
Let $\langle s,t\vert$ where: 
$$s= h(h(x_{1},h(x_{16},h(x_{32},h(x_{1},h(x_{16},x_{32}))))),\hat{f})$$ 
$$t= h(y_{1},h(y_{2},h(y_{3},h(y_{1},h(y_{2},y_{3})))))$$

The solved form of  $\langle s,t\vert_{11}$ results in a unifier $\sigma_{11}$ which when decomposed using Lemma~\ref{decompLemma2} contains the follow steps
$$D'(\mathit{Id},s,h(x_{4},h(x_{19},h(x_{35},h(x_{4},h(x_{19},x_{35}))))),9)$$
$$D'(\mathit{Id},s,h(x_{4},h(x_{19},h(x_{35},h(x_{4},h(x_{19},x_{35}))))),4).$$

This fits the cycle requirement of the decomposition outlined in Lemma~\ref{sufficientCondition}. Yet,  $\langle s,t\rangle_{28}$  is not unifiable. Interestingly, the loop $\langle s,t\vert$ where 
$$s= h(h(x_{1},h(x_{16},h(x_{31},h(x_{1},h(x_{16},x_{31}))))),\hat{a})$$ 
$$t= h(y_{1},h(y_{2},h(y_{3},h(y_{1},h(y_{2},y_{3})))))$$

is infinitely unifiable, though this is quite hard to show. The cycle outlined above still occurs, but stops at $\langle s,t\rangle_{17}$ where it is replaced by a more complex cyclic behavior. 
\end{example}
We conjecture that if a semiloop is infinitely unifiable then after a finite number of steps, a cycle, as presented in Theorem~\ref{sufficientCondition} will occur in the decomposition of an extension. Experimentally we have tested around 3 million different semiloops and have  failed to find a counterexample to this claim.

Together Theorem~\ref{finiteunif} and Theorem~\ref{sufficientCondition} provide a sufficient condition for semiloop unifiability. Unfortunately, developing a necessary condition for finite unifiability has so far remained as difficult as finding one for infinite unifiability. We leave further analysis to future work.

\section{Conclusion and Future Work}
In this paper we introduce the concept of {\em loop unification}, a variant of first-order syntactic unification which requires unifying an infinite sequence of recursively defined terms. Unlike first-order syntactic unification, loop unifiability has two facets, finite unifiability and infinite unifiability depending on how the terms in the infinite sequence depend on each other. We show that in both cases a sufficient condition for unifiability exists for a special subclass of loop unification, so called semiloop unification. Semiloop unification considers two infinite sequences, one which is recursively defined and the other for which every term is in the sequence is the same. Unfortunately, it is non-trivial to extend these conditions to be necessary as well, and thus this is left to future work. Also left to future work is extending our results to full loop unification and developing an algorithm based on our sufficient conditions. This algorithm can then be integrated into the computational proof analysis method introduced in~\cite{DBLP:journals/jar/CernaLL21}.

\bibliographystyle{IEEEtran}
\bibliography{references}
\end{document}